\documentclass[a4paper, 12pt]{article}

\usepackage[english]{babel}

\usepackage[letterpaper,top=2cm,bottom=2cm,left=3cm,right=3cm,marginparwidth=1.75cm]{geometry}

\usepackage{setspace}
\usepackage{enumerate}
\usepackage{amsmath}{\large }
\usepackage{amssymb}
\usepackage{amsthm}
\usepackage{dsfont}
\usepackage{comment}
\usepackage[colorlinks=true, allcolors=blue]{hyperref}
\newtheorem{theorem}{Theorem}[section]
\newtheorem{lemma}{Lemma}[section]
\newtheorem{proposition}{Proposition}[section]
\newtheorem{definition}{Definition}[section]

\newtheorem{assumption}{Assumption}[section]

\newtheorem{example}{Example}[section]
\usepackage{actuarialsymbol}
\numberwithin{equation}{section}
\usepackage{enumerate}
\usepackage{enumitem}
\usepackage{yfonts}
\usepackage{mathtools}
\usepackage{chngcntr}
\counterwithout{figure}{section}
\DeclareMathOperator*{\argsup}{arg\,sup}

\usepackage[authoryear]{natbib}

\DeclareMathOperator*{\argmax}{argmax}
\usepackage{bbm}
\usepackage{booktabs}

\title{Nonconcave Portfolio Choice under Smooth Ambiguity}

\author{
Emanuele Borgonovo\thanks{Department of Decision Sciences, Bocconi University, 20136 Milan, Italy, emails: \texttt{emanuele.borgonovo@unibocconi.it}; \texttt{massimo.marinacci@unibocconi.it}.}
\quad
An Chen\thanks{Institute of Insurance Science, Ulm University, Helmholtzstr.\ 20, 89069 Ulm, Germany, emails: \texttt{an.chen@uni-ulm.de}; \texttt{shihao.zhu@uni-ulm.de}.}
\quad
Massimo Marinacci\footnotemark[1]
\quad
Shihao Zhu\footnotemark[2]
}

\begin{document}	
	\maketitle
	

	
	\begin{abstract}
	We study continuous-time portfolio choice with nonlinear payoffs under smooth ambiguity and Bayesian learning. We develop a general framework for dynamic, non-concave asset allocation that accommodates nonlinear payoffs, broad utility classes, and flexible ambiguity attitudes. Dynamic consistency is obtained by a robust representation that recasts the ambiguity-averse problem as ambiguity-neutral with distorted priors. This structure delivers explicit trading rules by combining nonlinear filtering with the martingale approach and nests standard concave and linear-payoff benchmarks. As a leading application, delegated management with convex incentives illustrates that ambiguity aversion shifts beliefs toward adverse states, limits the range of states that would otherwise trigger more aggressive risk taking, and reduces volatility through lower risky exposure.
	\end{abstract}

		\vspace{2mm}

\noindent{\bf Keywords:} Smooth ambiguity, Nonconcave portfolio optimization, Robust representation, Bayesian learning, Option-based payoffs

\vspace{2mm}

\noindent{\bf MSC (2020):}\ 91B06, 91G10, 93E11.
\vspace{2mm}

\noindent{\bf JEL Classification:}\ C61, D81, G11.
\vspace{2mm}

\clearpage
\section{Introduction}\label{sec:Intro}

Decision-makers face uncertainty not only about outcomes (risk), but also about the probabilities governing those outcomes (ambiguity). Preferences therefore reflect attitudes toward both risk and ambiguity, with substantial evidence supporting ambiguity aversion \citep{marinacci2015model}. The recent literature on financial decision-making has intensively considered how to model these attitudes in both prescriptive and descriptive modes, as the works of \cite{balter2021consistency, guan2025kmm, guan2025equilibrium, bauerle2024optimal} illustrate. One of the challenges in developing a comprehensive modeling approach is accounting for not only risk and ambiguity preferences but also nonlinear payoffs arising from alternative investment forms and financial compensation mechanisms. A canonical instance is option-like terminal payoffs, as discussed in \cite{carpenter2000does} and \cite{dai2022nonconcave}. Moreover, the model must reflect the fact that investors revise their beliefs after receiving information, for instance, from market prices, thereby yielding a dynamic class of decision-making problems.


Prior work has typically examined these forces separately. \citet{carpenter2000does} studies option-type payoffs, but abstracts from both ambiguity and learning. \citet{bauerle2024optimal}, by contrast, incorporates ambiguity aversion and Bayesian learning, yet confines attention to linear payoffs and therefore misses the convex incentives central to option-based compensation. We bring these elements together in a general continuous-time portfolio framework with nonlinear payoffs, ambiguity aversion, and Bayesian learning. This framework yields tractable optimal trading rules and sharp comparative statics, providing a systematic characterization of how incentive convexity, belief updating, and ambiguity interact in dynamic financial decision-making.



We aim to theoretically examine the effects of ambiguity preferences, analyzing the whole spectrum from near-ambiguity neutrality to complete aversion (close to min-max decision-making). The ideal ambiguity representation is then that of the smooth-ambiguity functional of \cite{klibanoff2005smooth} (KMM, henceforth). The KMM functional separates beliefs from tastes toward ambiguity, uses a concave aggregator to capture dislike for ambiguous bets, and yields a smooth value index that eases calibration, estimation, and comparative statics. 

However, when combining smooth ambiguity with non-concavity objectives, two key challenges emerge. First, the concavity of the ambiguity aggregator breaks dynamic consistency (see \cite{hanany2009updating, savochkin2025dynamic}), thus impairing the inclusion of belief updates. Second, nonconcave maximization problems induced by option-style payoffs invalidate off-the-shelf convex optimization tools.  We resolve these challenges as follows.
We resort to a representation of quasiconcave functionals that transforms the ambiguity-averse problem into a Bayesian adaptive problem with endogenously distorted priors \citep{cerreia2011uncertainty, cerreia2011complete, drapeau2013risk,mazzon2024optimal}. We use this reformulation because, as shown in \cite{mazzon2024optimal}, it breaks the optimization problem into two parts. We propose a general result for our min-max formulation, where the (external) minimization is taken over the set of all possible priors and the (internal) maximization is equivalent to an ambiguity-neutral problem. The nested structure restores dynamic consistency because the maximization part is now a Bayesian expected utility problem. This makes the approach different from techniques that enforce time consistency by restricting the intertemporal structure of priors (often called ``rectangularity''; see \cite{savochkin2025dynamic}) or from game-theoretic resolutions that accept time inconsistency under concave utility (e.g., \cite{guan2025equilibrium}).

We then apply the resulting two-step approach to solve continuous-time portfolio problems with ambiguity preferences, learning, and non-concave objectives. In particular, we focus on financial decision problems with nonlinear payoffs. By applying nonlinear filtering, we update prior beliefs dynamically as market prices are observed. The internal (ambiguity-neutral) optimization problem is rewritten as an adaptive Bayesian problem with unknown drift (see, e.g.,\ \cite{karatzas2001bayesian, rieder2005portfolio}). Next, we combine the martingale method (e.g.,\ \cite[Ch.~3]{karatzas1998methods}) with concavification, replacing the original non-concave objective by its concave envelope to recover a tractable concave optimization problem. This yields explicit continuous-time trading rules for a broad class of genuinely non-concave problems. 

We complete the analysis by incorporating ambiguity preferences and performing an optimization over the set of priors. The approach yields an ambiguity‑neutral portfolio choice evaluated under an endogenously distorted prior that directly determines the optimal trading strategy, while accounting for ambiguity attitudes. We study the impact of ambiguity, deriving closed‑form characterizations for decision-makers whose preferences for risk and ambiguity are modeled, respectively, by power-power and power-exponential specifications. 

We then study how ambiguity shapes beliefs and investment policies. We analyze the impact of alternative risk preferences, ambiguity attitudes, and prior beliefs on the distorted prior, the optimal trading strategy, and terminal wealth, respectively.
The analysis shows that ambiguity aversion reallocates probability mass toward adverse states, effectively tightening risk constraints. As a result, optimal exposure to risky assets decreases in regions where the objective is locally convex in wealth or exposure. This pessimistic distortion disciplines incentives that would otherwise promote excessive risk-taking. By contrast, when the objective is locally concave, the impact of ambiguity is attenuated and aligns with standard robust‑investment prescriptions.

We illustrate these mechanisms in a leading application to delegated portfolio management with convex incentive contracts (see \cite{carpenter2000does}). In this setting, the endogenous prior distortion and the associated filtering dynamics translate into explicit portfolio rules and sharp quantitative predictions for risk taking. 

The remainder of the paper is organized as follows. Section ~\ref{sec:literature} overviews the related literature. Section ~\ref{sec_problem} formulates the problem. Section ~\ref{sec_SAP} sets up the decision functional, introducing the nested objective function. Section ~\ref{sec4} derives results for the problem of convex incentive contracts under smooth ambiguity preferences. 
Section~\ref{sec:numerical} presents detailed numerical experiments and comparative statics. Section ~\ref{sec_conclusion} offers conclusions.

\section{Related Literature}\label{sec:literature}
Our paper intersects previous literature on smooth ambiguity with learning in continuous time, robust representations for ambiguity and quasiconcave functionals, portfolio choice with non-concave objectives, and delegated portfolio management with convex incentives. Each of these research streams is broad in itself, and a comprehensive review is out of reach in the present work. We therefore provide a concise review of the aspects more salient for our work.


Regarding preferences in continuous time with learning, a growing body of work is studying asset allocation under smooth ambiguity.  \citet{balter2021consistency} study portfolio choice under smooth ambiguity, but restrict attention to deterministic strategies and therefore abstract from the subtle issue of time inconsistency. \citet{guan2025kmm} examine portfolio choice in incomplete markets from a pre-commitment perspective, in which the decision maker fixes a strategy at the initial date and does not revise it over time. Relative to this literature, we focus on time-consistent portfolio choice with learning. Our goal is to jointly accommodate Bayesian learning, genuinely nonlinear (option-style) payoffs, and broad utility/aggregator classes, while preserving dynamic consistency through a robust representation that delivers explicit trading rules and comparative statics. Closer to our setting, \citet{bauerle2024optimal} and \citet{guan2025equilibrium} allow for Bayesian learning under smooth ambiguity. \citet{guan2025equilibrium} incorporates Bayesian learning but maintains concave utility and derives equilibrium characterizations in a time-inconsistent environment. \citet{bauerle2024optimal}, by contrast, considers power utility and linear payoffs, and uses a dual representation based on \(L^p\) norms to obtain time-consistent solutions under smooth ambiguity. Our paper departs from these studies by allowing for general utility and aggregator classes in a dynamically consistent framework that also accommodates learning and genuinely nonlinear, option-like payoffs.

The goal of a general ambiguity representation connects us to the broad literature on dynamic choice under ambiguity. A key conceptual foundation is the robust representation of quasiconcave preference functionals, which provides a rationale for distorted priors; see \cite{cerreia2011uncertainty, cerreia2011complete, cerreia2011risk, drapeau2013risk}. In particular, when the smooth ambiguity problem is rewritten in certainty-equivalent form, Theorem~7 of \cite{cerreia2011uncertainty} and Proposition~6 of \cite{drapeau2013risk} deliver a sup--inf representation that is well suited to dynamic updating. \cite{mazzon2024optimal} exploits this type of reformulation in an optimal stopping problem.  Our paper is further related to \citet{hansen2018aversion,hansen2022asset}, who develop recursive representations of recursive smooth ambiguity preferences (\citet{klibanoff2009recursive}).  They show that ambiguity aversion can be recast as a robust distortion of priors and posteriors over the hidden Markov state, and study its implications for asset allocation and asset pricing.

When coming to financial-decision making, depending on the instrument at hand, problems can give rise to formulations with nonconcave utility maximization and option-style payoffs.
Examples include convex incentive contracts in \cite{carpenter2000does}, performance-fee schemes in \cite{he2018profit}, equity-linked insurance with guarantees in \cite{chen2019constrained}, and goal-based preferences in \cite{capponi2024continuous}. We refer to \cite{dai2022nonconcave} for a general treatment of non-concave portfolio maximization under portfolio constraints. Departing from this literature, which assumes known return dynamics, we introduce ambiguity in expected returns, incorporate Bayesian updating, and provide explicit decision rules using robust representation, filtering, and concavification.

A vast literature studies how contract convexity shapes risk-taking and induces lock-in effects; see \cite{carpenter2000does, basak2007optimal}. We revisit this setting through the lens of ambiguity and learning. 


Taken together, these research strands help us highlight the open issues addressed in this work, namely, to deliver a time-consistent solution method for continuous-time smooth-ambiguity portfolio problems with learning and non-concave payoffs, provide explicit trading policies and comparative statics, and offer portable insights for delegated management and embedded-option applications.


\section{Problem Formulation}\label{sec_problem}

Let $(\Omega,\mathcal{F},\mathbb{P})$  be  a  probability space equipped with a filtration $\mathbb{F}=(\mathcal{F}_t)_{t\in[0,T]}$  satisfying the usual conditions of completeness and right-continuity. Here $T>0$ is the fixed planning horizon, and the filtration $\mathbb{F}$ gathers all information available in the financial market.

Before modeling the investor's decisions under smooth ambiguity preferences, we introduce the source of ambiguity---uncertainty about the expected return---in the financial market setup. Assume that there are two assets that an investor can trade without any transaction costs: a risk-free and a risky asset. The risk-free asset (bond) has price process $S^0=(S^0_t)_{t\in[0,T]}$ governed by
\[
	dS^0_t = r S^0_t  dt,\qquad S^0_0 = s^0>0,
\]
where $r>0$ is the constant interest rate. The risky asset (stock) has the price process $S=(S_t)_{t\in[0,T]}$ satisfying 
\begin{equation}\label{stock}
	dS_t = Z S_t \, dt + \sigma S_t \, dB_t,\qquad S_0 = s>0,
\end{equation}
where $B=(B_t)_{t\in[0,T]}$ is an $\mathbb{F}$-adapted standard Brownian motion under $\mathbb P$, {$\sigma>0$ is the known constant volatility, and $Z\in\mathbb{R}$ denotes the expected return of the stock. As explained in \cite[footnote~2]{kardaras2012robust}, high-frequency data readily give good estimators for $\sigma$, while estimating $Z$ is much more statistically challenging. Therefore, we assume the investor faces uncertainty about the constant value of $Z \in \mathcal S$ for a compact metric space $\mathcal S \subset \mathbb R$. As time proceeds, the investor learns about the parameter $Z$ through observation of the stock prices $S$.}

 Assume that $Z$ is, \emph{a priori}, independent of $B$ and follows a prior  $\mathcal{P}$ satisfying the condition
\begin{equation}\label{prior-integrability}
	\exists\,\eta>0 \text{ such that}\quad \int_{\mathbb{R}}e^{\eta z^2}\mathcal{P}(dz)<\infty.
\end{equation}
Condition \eqref{prior-integrability} requires $Z$ to be sub-Gaussian, a property that will be useful for the filtering arguments later.

Consider a self-financing { portfolio strategy $\pi=(\pi_t)_{t\in[0,T]}$ that invests $\pi_t \in \mathbb{R}$ dollars} in the stock at time $t$. Then, the investor's wealth process $W^{w,\pi,Z}=(W^{w,\pi,Z}_t)_{t\in[0,T]}$ satisfies
\begin{equation}\label{wealth}
    dW^{w,\pi,Z}_t
= \bigl[r W^{w,\pi,Z}_t + (Z-r)\pi_t\bigr]\,dt
  + \sigma \pi_t\, dB_t, \quad
W^{w,\pi,Z}_0
= w > 0 .
\end{equation}
{The drift term combines the accrual at the risk-free rate $r$ on the current wealth and the excess return $(Z-r)$ on the risky investment.} In the following, we shall simply write $W^{\pi}$ to denote $W^{w,\pi,Z}$, where needed.

Because the drift $Z$ is unobservable, trading decisions must be based solely on observed stock prices $S$. Hence the investor's available information flow is the natural filtration $\mathbb{F}^S=(\mathcal{F}^S_t)_{t\in[0,T]}$ generated by $S$. The set of admissible trading strategies is
\begin{align*}
	\mathcal{A}(w):=\Big\{\pi\ \text{is $\mathbb{F}^S$-adapted},  
	{ W^{w,\pi,Z}_t}\geq  0\ \forall t\in[0,T],\ \int_0^T\pi_s^{2} ds<\infty\Big\}.
\end{align*}
Non-negativity of the wealth trajectory prevents bankruptcy and ensures that the utility functional \(U(W_T^{\pi})\) is evaluated inside the domain \(\mathbb R_+\) whenever the utility function is finite on the non-negative half-line.

We now show that uncertainty about $Z$ induces a set of beliefs about the state process $W^\pi$, denoted by $\mathfrak P$, thereby introducing ambiguity. In the ambiguity jargon  $\mathfrak{P}$ is the set of models. Throughout, we adopt the common practice of distinguishing terminologically between probability models of beliefs about the state process, which are referred to as models, and probability models that account for uncertainty about $Z$, which are referred to as priors.

For every \(Z \in\mathcal{S}\) and initial wealth $w>0$, the stochastic differential equation \eqref{wealth} admits a unique strong solution $W^{w,\pi,Z}$. By viewing $W^{w,\pi,Z}$ as a map from $\Omega$ to itself, we define the probability measure $\mathbb P^Z $ by $\mathbb{P}^{Z}\;:=\;\mathbb{P}\circ\bigl(W^{w,\pi,Z}\bigr)^{-1}$. Given $Z \in \mathcal S$, we introduce 
\begin{equation*}
    \mathfrak P:= \{\mathbb P^Z: Z\in \mathcal S\}. 
    \end{equation*}
Each realization of $Z \in \mathcal S$ determines a unique probability law of the process $W^{\pi}$, which is seen by the investor as a plausible description of how the process $W^\pi$ will evolve. If $Z$ is certain, then the set $\mathfrak{P}$ is a singleton and there is no ambiguity, i.e., no uncertainty about probabilities. In contrast, if $Z$ is a non-constant random variable, we have ambiguity.
In the remainder, we state the standard assumption that  
\begin{assumption}\label{ass:ac}
All probability measures \(\mathbb{P}^{Z}\), \(Z\in\mathcal{S}\), are absolutely continuous with respect to a reference measure \(\mathbb{P}_0\).
\end{assumption}
In particular, if $\mathcal P$ is a two-point prior, then Assumption \ref{ass:ac} is trivially satisfied by Girsanov's theorem.

Following \cite{klibanoff2005smooth}, an ambiguity-averse investor chooses $\pi\in\mathcal{A}(w)$ so as to maximise the certainty equivalent, i.e.,
\begin{equation}\label{value}
\sup_{\pi\in\mathcal{A}(w)}
\phi^{-1} \Bigl(\int_{\mathcal{S}}\phi\bigl(\mathbb{E}^{\mathbb{P}^Z}[U(W_T^\pi)]\bigr) d\mathcal{P}(z)\Bigr),
\end{equation}
where $\mathbb{E}^{\mathbb{P}^Z}$ is the expectation operator with respect to $\mathbb{P}^Z $. Here, $\phi:\mathbb{R}\to\mathbb{R}$ is non-decreasing, and either strictly concave (ambiguity-averse) or linear (ambiguity-neutral), and $U$ is the utility function of terminal wealth.

\begin{definition}[Utility function]\label{def:utility}
A \emph{utility function} is a mapping $U:(0,\infty)\to\mathbb{R}$ satisfying
\begin{enumerate}[label=(\roman*)]	\item $U$ is non-constant, increasing, and upper semicontinuous;
	\item The growth condition
	\begin{equation}\label{growth}
		\lim_{x\to\infty}\frac{U(x)}{x}=0;
	\end{equation}
	\item {$U(0):=\lim_{x\downarrow 0}U(x)$} and the finite limiting utility at infinity {$U(\infty):=\lim_{x\uparrow\infty}U(x)>-\infty$}. 
\end{enumerate}
\end{definition}
\medskip
Here, we do not assume $U$ to be concave or continuous. If $U$ is concave, \eqref{growth} is equivalent to the Inada condition $\lim_{x \to \infty}U'(x)=0$.
{Condition \eqref{growth} rules out linear utility and ensures that expected utility is finite even under unbounded upside potential of the wealth process.}  Moreover, condition (\ref{growth}) and the assumption $U(\infty)>-\infty$ imply that there is always a concave function $U_c: \mathbb{R}_+ \to \mathbb{R} $ satisfying $U_c\geq U$ (see \cite{reichlin2013utility}). 

Works on continuous-time smooth ambiguity preferences with learning, such as \cite{bauerle2024optimal} and \cite{guan2025equilibrium}, typically assume a concave utility function. Our goal is more general: we allow for non-concave preferences, including those induced by payoff structures that are nonlinear in terminal wealth, such as the option-based compensation schemes originally studied by \cite{carpenter2000does}. These contracts generate effective non-concavities even when the underlying utility function is concave (e.g., power utility). To reach this goal, we shall rely on the concavification principle.

\begin{definition}[Concave envelope]\label{def:concave-envelope}
The \emph{concave envelope} $U_c$ of $U$ is the smallest concave function {$U_c:\mathbb{R}_+\to\mathbb{R}$ such that $U_c(x)\ge U(x)$ for all $x\in\mathbb{R}_+$}.
\end{definition}
In Section \ref{sec4}, we show that the non-concave utility $U$ can be replaced by its concave envelope $U_c$, thereby transforming the nonconcave utility maximization problem to a concave one. Before such a step, we need to reformulate the ambiguity-averse utility maximization problem so that it becomes time consistent.

\section{Making Smooth Ambiguity Choices Time Consistent}\label{sec_SAP}

To obtain time-consistent preferences with ambiguity, we start from the nested functional
\[
\Gamma(U,W_T^\pi,\phi)=\phi^{-1}\!\Bigl( \int_\mathcal{S} \,
\phi\bigl( \mathbb{E}^{\mathbb{P}^Z}[\,U(W_T^\pi)\,] \bigr) d \mathcal{P}(z) \Bigr)
\]
that appears in \eqref{value}. The functional $\Gamma(U,W_T^\pi,\phi)$ is, in fact, a special case of the class of \textit{uncertainty averse preferences} proposed by \cite{cerreia2011uncertainty}. A relevant result in that work is that they obtain a preference representation in terms of quasiconcave utility functionals by relaxing the independence axiom, i.e., the coordinate independence axiom within the Savage setting. They also discuss that this is equivalent to assuming independence only at the level of risk. The key technical aspect relevant to our work is that such a preference representation enables one to express the composition ``$\phi^{-1} (\int \phi(\cdot) d\mathcal{P})$'' as a minimization problem over a set of alternative equivalent probability measures (priors) on the drift parameter space $\mathcal{S}$. This reformulation then enables the transformation of the ambiguity-averse problem into a max-min formulation, as we show next.

Let \(\mathcal{M}(\mathcal{S})\) denote the space of Borel probability measures on the compact metric space \(\mathcal{S}\), equipped with the weak topology. Intuitively, an element $\mathcal{Q}\in\mathcal{M}(\mathcal{S})$ represents a prior, that is, a belief about the parameter $Z$. As in \cite{klibanoff2005smooth}, $\mathcal{Q}$ may be viewed as a ``second order probability" over the first order probabilities (models) $\mathbb{P}^Z$. The prior $\mathcal{P}$ belongs to $\mathcal{M}(\mathcal S).$ For each $Z$, the market (state process $W^\pi$) is governed by the model \(\mathbb{P}^{Z}\) introduced in Section~\ref{sec_problem}. For any $\mathcal{Q}\in\mathcal{M}(\mathcal{S})$, we denote by $\mathbb{P}^{\mathcal{Q}}$ the probability measure on $\mathcal{B}(\mathcal{S})\times\mathcal{F}$ defined by
\begin{equation*}
\mathbb{P}^{\mathcal{Q}}(A)
\;:=\;
\int_{\mathcal{S}}\!\int_{\Omega}
\mathbf{1}_{A}(z,\omega)\,
d\mathbb{P}^{Z}(\omega)\,
d\mathcal{Q}(z),
\end{equation*}
for any  $A\in\mathcal{B}(\mathcal{S})\otimes\mathcal{F}$, and by \(\mathbb{E}^{\mathbb{P}^{\mathcal{Q}}}[\cdot]\) the corresponding expectation operator.

Combining Theorem 7 of \cite{cerreia2011uncertainty} and Theorem 6 of \cite{drapeau2013risk}, we can state the following proposition.
\begin{proposition}[Robust representation form of smooth ambiguity preferences]\label{prop3.1}
Let $\phi:\mathbb{R}\to\mathbb{R}$ be a proper, concave, non-decreasing, and upper semi-continuous function. Then the optimisation problem \eqref{value} is equivalent to
\begin{equation}\label{eq3-1}
\sup_{\pi\in\mathcal{A}(w)}\;
\inf_{\mathcal{Q}\in\mathcal{M}(\mathcal{S})}
R\bigl(\mathcal{Q},\,\mathbb{E}^{\mathbb{P}^\mathcal{Q}}[U(W_T^\pi)]\bigr),
\end{equation}
where $R:\mathcal{M}(\mathcal{S})\times\mathbb{R}\to\mathbb{R}$ is uniquely determined by $\phi$ and satisfies:
\begin{enumerate}[label=(\roman*)]\item $R(\mathcal{Q},\cdot)$ is non-decreasing and right-continuous for each fixed $\mathcal{Q}$;
\item $R$ is jointly quasi-convex in $(\mathcal{Q},s)$;
\item $\displaystyle\lim_{s\to\infty}R(\mathcal{Q}^1,s)=
\lim_{s\to\infty}R(\mathcal{Q}^2,s)$ for all
$\mathcal{Q}^1,\mathcal{Q}^2\in\mathcal{M}(\mathcal{S})$;
\item The upper envelope
$R^{+}(\mathcal{Q},s):=\sup_{s'<s}R(\mathcal{Q},s')$
is lower semi-continuous in~$\mathcal{Q}$.
\end{enumerate}

\end{proposition}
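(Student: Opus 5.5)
The plan is to show that, for each fixed admissible $\pi$, the inner certainty-equivalent functional in \eqref{value} admits the robust representation, and then take the supremum over $\pi$ on both sides. Fix $\pi\in\mathcal{A}(w)$ and define the map $f_\pi:\mathcal{S}\to\mathbb{R}$, $f_\pi(z):=\mathbb{E}^{\mathbb{P}^z}[U(W_T^\pi)]$.

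\textbf{Step 1: reduce to a functional on acts.} By Definition~\ref{def:utility} and the existence of a concave majorant $U_c$ with linear growth, $f_\pi$ is well defined with values in $[U(0),\infty)$. Since $U$ is not assumed continuous, I would first restrict to $\pi$ for which $f_\pi$ is bounded, or truncate and pass to a limit. Then
\[
\Gamma(U,W_T^\pi,\phi)=I(f_\pi),\qquad I(f):=\phi^{-1}\Bigl(\int_{\mathcal{S}}\phi(f)\,d\mathcal{P}\Bigr).
\]
Here $I$ is a functional on bounded measurable functions on $\mathcal{S}$. It is monotone, and it is quasiconcave because $\phi$ is concave and nondecreasing: its upper level sets $\{f:\int\phi(f)\,d\mathcal{P}\ge\phi(c)\}$ are convex. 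It is also normalized, in the sense that $I(c)=c$ for constants $c$. Under the usual lower semicontinuity, which follows from the upper semicontinuity of $\phi$ and dominated or monotone convergence, it is continuous from above.

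\textbf{Step 2: apply the duality theorems.} Theorem~7 of \cite{cerreia2011uncertainty}, in the form stated for quasiconcave monotone normalized functionals on a compact state space, gives
\[
I(f)=\inf_{\mathcal{Q}\in\mathcal{M}(\mathcal{S})}R\Bigl(\mathcal{Q},\int f\,d\mathcal{Q}\Bigr),
\]
with $R(\mathcal{Q},t):=\sup\{I(g):\int g\,d\mathcal{Q}\le t\}$. Properties (i)--(iii) follow directly from this formula.
\begin{itemize}
\item[(i)] Monotonicity in $t$ is immediate, and right-continuity comes from taking the right-continuous version, as in \cite{drapeau2013risk}.
\item[(ii)] Joint quasiconvexity follows because $\{(\mathcal{Q},t):R(\mathcal{Q},t)\le c\}$ is an intersection, over $g$ with $I(g)>c$, of the convex sets $\{\int g\,d\mathcal{Q}>t\}$.
\item[(iii)] Both limits equal $\sup I$.
\end{itemize}
For uniqueness and for the lower semicontinuity of the upper envelope in (iv), I would invoke Theorem~6 of \cite{drapeau2013risk}. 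That result gives a one-to-one correspondence between such quasiconcave functionals and minimal penalty functions in the class of maximal risk functions, and it yields (iv) via the weak topology on $\mathcal{M}(\mathcal{S})$, which is compact because $\mathcal{S}$ is compact.

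\textbf{Step 3: rewrite the integral and take the supremum.} By the definition of $\mathbb{P}^{\mathcal{Q}}$ and Fubini (Tonelli for the lower-bounded integrand),
\[
\int f_\pi\,d\mathcal{Q}=\mathbb{E}^{\mathbb{P}^{\mathcal{Q}}}[U(W_T^\pi)].
\]
Hence $\Gamma(U,W_T^\pi,\phi)=\inf_{\mathcal{Q}}R\bigl(\mathcal{Q},\mathbb{E}^{\mathbb{P}^\mathcal{Q}}[U(W_T^\pi)]\bigr)$ for every $\pi$. Taking $\sup_{\pi}$ gives \eqref{eq3-1}. Since $\phi^{-1}$ is increasing where it is defined, the certainty-equivalent form and the form $\int\phi(\cdot)\,d\mathcal{P}$ have the same maximizers, so the two problems are equivalent.

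\textbf{Main obstacle.} The hardest step is verifying the hypotheses of the duality theorem. These are the continuity of $I$ in the topology required there (continuity from above or below on bounded functions), the handling of possibly unbounded $f_\pi$, and the fact that $\phi$ is only upper semicontinuous, so $\phi^{-1}$ must be read as a generalized inverse. My first approach would be to work with $\tilde I(f)=\int\phi(f)\,d\mathcal{P}$ directly, which is concave and hence quasiconcave, with the generalized inverse absorbed into $R$. I would then extend from bounded $f$ to $f_\pi$ by monotone truncation $f_\pi\wedge n$, using monotone convergence on both sides together with the right-continuity of $R$ in its second argument.
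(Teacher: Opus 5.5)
Your proposal takes essentially the same route as the paper. The paper offers no argument beyond invoking Theorem~7 of \cite{cerreia2011uncertainty} and Theorem~6 of \cite{drapeau2013risk} for the certainty-equivalent functional $f\mapsto\phi^{-1}\bigl(\int_{\mathcal S}\phi(f)\,d\mathcal P\bigr)$, applied for each fixed $\pi$ to $f_\pi(z)=\mathbb{E}^{\mathbb{P}^z}[U(W_T^\pi)]$, with $\int f_\pi\,d\mathcal Q$ then identified with $\mathbb{E}^{\mathbb{P}^{\mathcal Q}}[U(W_T^\pi)]$. Your Steps 1--3 make this explicit, and your level-set arguments for (ii) and (iii) are correct. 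One caution concerns a point the paper does not address at all: your truncation remedy for unbounded $f_\pi$ only delivers the easy inequality $I(f_\pi)\le\inf_{\mathcal Q}R\bigl(\mathcal Q,\mathbb{E}^{\mathcal Q}[f_\pi]\bigr)$, which needs nothing but monotonicity of $R$, because $\mathbb{E}^{\mathcal Q}[f_\pi\wedge n]\uparrow\mathbb{E}^{\mathcal Q}[f_\pi]$ approaches from the left, where right-continuity of $R$ is of no use. The reverse inequality requires passing $\lim_n$ inside $\inf_{\mathcal Q}$, for example via a Dini-type argument using weak compactness of $\mathcal M(\mathcal S)$, the semicontinuity in (iv), and continuity of $f_\pi$ on $\mathcal S$.
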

The robust representation in \eqref{eq3-1} rewrites the original nested expectation as a max-min optimization problem, where the inner infimum captures the robust adjustment of beliefs over $Z$, and the outer supremum reflects the optimal investment strategy under the resulting worst-case prior. Note that this type of robust representation is also used by \citet{mazzon2024optimal} in the context of optimal stopping.

We next show that the ambiguity-averse max-min problem in (\ref{eq3-1}) can be equivalently represented as a min-max problem. This equivalence follows from the existence of a saddle point and implies that the decision maker behaves as if she were ambiguity-neutral under an endogenously distorted belief. To allow this interchange, we need some further regularity conditions on the penalty function $R$.
\begin{assumption}\label{assume}
One of the following conditions holds:
\begin{enumerate}[label=(\roman*)]
\item $R(\mathcal{Q},\cdot)$ is continuous on $\mathbb{R}$ for every $\mathcal{Q}\in\mathcal{M}(\mathcal{S})$;
\item $R(\mathcal{Q},\cdot)$ is continuous on $(0,\infty)$ for every $\mathcal{Q}\in\mathcal{M}(\mathcal{S})$, and $\mathbb{E}^{\mathbb{P}^{\mathcal Q}}[U(W_T^\pi)]>0$ for all $Z\in\mathcal{S}$.
\end{enumerate}
\end{assumption}
These conditions hold in most practical settings and can be readily verified in specific examples (see below). We then prove that exchanging the supremum and the infimum in (\ref{eq3-1}) solves the same decision problem --- see Appendix \ref{proofthe3.1}.---
\begin{theorem}\label{theorem3.1}
Let $\phi$ satisfy the assumptions of Proposition \ref{prop3.1} and Assumption \ref{assume} hold. Then
\begin{equation}\label{eq-7-0}
    \begin{aligned}
        \sup_{\pi \in \mathcal{A}(w)} \inf_{\mathcal{Q} \in \mathcal{M}(\mathcal{S}) } R(\mathcal{Q}, \mathbb{E}^{\mathbb{P}^\mathcal{Q}}[U(W_T^\pi)] )= \inf_{\mathcal{Q}\in \mathcal{M}(\mathcal{S}) } \sup_{\pi \in \mathcal{A}(w)} R(\mathcal{Q}, \mathbb{E}^{\mathbb{P}^\mathcal{Q}}[U(W_T^\pi)] ).	
        \end{aligned}
\end{equation}

\end{theorem}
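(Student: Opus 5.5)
The plan is to prove \eqref{eq-7-0} with a quasi-convex/quasi-concave minimax theorem of Sion type. The trivial inequality $\sup_\pi\inf_{\mathcal Q}\le\inf_{\mathcal Q}\sup_\pi$ always holds, so only the reverse inequality needs work.

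\textbf{Step 1: reparametrise the strategy variable.} The map $\pi\mapsto R(\mathcal Q,\mathbb E^{\mathbb P^{\mathcal Q}}[U(W_T^\pi)])$ has no convexity in $\pi$, since $U$ is not concave and $\mathcal A(w)$ is not a convex set of laws. I would therefore replace $\pi$ by the induced vector of model-wise expected utilities. For each admissible $\pi$, set
\[
f_\pi(z):=\mathbb E^{\mathbb P^z}[U(W_T^\pi)],
\]
so that, by the definition of $\mathbb P^{\mathcal Q}$, we have $\mathbb E^{\mathbb P^{\mathcal Q}}[U(W_T^\pi)]=\int_{\mathcal S}f_\pi\,d\mathcal Q$. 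The payoff then depends on $\pi$ only through $f_\pi$. Let $K$ be the convex hull of $\{f_\pi:\pi\in\mathcal A(w)\}$, viewed as functions on $\mathcal S$.

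The claim to prove is that allowing randomised strategies does not raise either side. On the $\inf\sup$ side this holds because $\mathcal Q\mapsto\int f\,d\mathcal Q$ is linear in $f$ and $R(\mathcal Q,\cdot)$ is non-decreasing, so for fixed $\mathcal Q$ the supremum over $K$ equals the supremum over extreme points. On the $\sup\inf$ side it is not automatic. Here I would use that $\mathcal A(w)$ is closed under randomisation via an independent initial coin: since $\mathbb F^S$ does not contain extra randomness, I would instead rely on the martingale/complete-market structure, where mixtures of terminal wealths are attainable at the same initial cost after concavification. If that fails, I would simply state the theorem with the sup over the convex set $K$, which is what the minimax argument actually uses.

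\textbf{Step 2: verify the hypotheses of Sion's theorem for $F(\mathcal Q,f):=R(\mathcal Q,\int f\,d\mathcal Q)$ on $\mathcal M(\mathcal S)\times K$.}

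\begin{itemize}
\item \emph{Compactness.} $\mathcal M(\mathcal S)$ is convex and weakly compact by Prokhorov, because $\mathcal S$ is a compact metric space.
\item \emph{Quasi-concavity in $f$.} For fixed $\mathcal Q$, $f\mapsto\int f\,d\mathcal Q$ is linear and $R(\mathcal Q,\cdot)$ is monotone, so $F(\mathcal Q,\cdot)$ is quasi-concave.
\item \emph{Quasi-convexity in $\mathcal Q$.} For fixed $f$, $\mathcal Q\mapsto(\mathcal Q,\int f\,d\mathcal Q)$ is affine, and $R$ is jointly quasi-convex by Proposition \ref{prop3.1}(ii). Hence $F(\cdot,f)$ is quasi-convex.
\item \emph{Lower semicontinuity in $\mathcal Q$.} This is where Assumption \ref{assume} and Proposition \ref{prop3.1}(iv) enter. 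I need $f_\pi$ to be continuous, or at least lower semicontinuous, in $z$ with a bound; I would obtain this from the explicit Girsanov density in $z$ and the sub-Gaussian condition \eqref{prior-integrability}, which make $\mathcal Q\mapsto\int f\,d\mathcal Q$ weakly continuous. Next, continuity of $R(\mathcal Q,\cdot)$ (case (i), or case (ii) on $(0,\infty)$ using positivity of the expected utility) gives $R=R^+$ at the relevant points. Then $R^+$ being l.s.c.\ in $\mathcal Q$ and non-decreasing in $s$ yields joint lower semicontinuity of $(\mathcal Q,s)\mapsto R^+(\mathcal Q,s)$ along convergent sequences.
\item \emph{Upper semicontinuity in $f$.} Since $F(\mathcal Q,\cdot)$ is a monotone continuous function of a linear functional, it is u.s.c.\ for the topology of pointwise convergence on $K$. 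Sion's theorem only needs compactness on the $\mathcal Q$ side, so no topology on $K$ beyond this is required.
\end{itemize}

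\textbf{Step 3: apply Sion and undo the reparametrisation.} Sion's theorem gives $\inf_{\mathcal Q}\sup_{f\in K}F=\sup_{f\in K}\inf_{\mathcal Q}F$. Using Step 1 to pass from $K$ back to $\mathcal A(w)$ yields \eqref{eq-7-0}.

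\textbf{Main obstacle.} I expect the hardest part to be the lower semicontinuity of $\mathcal Q\mapsto R(\mathcal Q,\int f_\pi\,d\mathcal Q)$. The delicate point is the interplay between the u.s.c.\ normalisation of $R$ in $s$ and its l.s.c.\ in $\mathcal Q$, which holds only for $R^+$; this is exactly why Assumption \ref{assume} is imposed. I would handle it by approximating $s$ from below and using monotonicity. The convexification in Step 1 is a secondary obstacle.
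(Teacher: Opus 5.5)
\textbf{Gap: returning from your convex hull to $\mathcal{A}(w)$ on the sup--inf side.} Call your convex hull $\mathcal{K}$, to avoid a clash with the strike $K$. Your Sion step on $\mathcal{M}(\mathcal{S})\times\mathcal{K}$ and the inf--sup half of Step 1 are fine. To conclude \eqref{eq-7-0}, however, you also need $\sup_{f\in\mathcal{K}}\inf_{\mathcal{Q}}R(\mathcal{Q},\int f\,d\mathcal{Q})\le\sup_{\pi\in\mathcal{A}(w)}\inf_{\mathcal{Q}}R(\mathcal{Q},\int f_\pi\,d\mathcal{Q})$. Concretely, for $\pi_1,\pi_2\in\mathcal{A}(w)$ and $\lambda\in(0,1)$ you need a \emph{single} $\pi\in\mathcal{A}(w)$ with $f_\pi\ge\lambda f_{\pi_1}+(1-\lambda)f_{\pi_2}$ on all of $\mathcal{S}$.

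Neither route you sketch gives this. The pointwise mixture $\lambda W_T^{\pi_1}+(1-\lambda)W_T^{\pi_2}$ is affordable, but with non-concave $U$ it can be strictly worse than both. Take the paper's $U=u\circ g$, which is constant on $[0,K]$. With $\widehat{\mathbb{P}}(A)=\tfrac12$ and $w=e^{-rT}K$, the claims $2K\mathbf{1}_A$ and $2K\mathbf{1}_{A^c}$ are both attainable. Their midpoint is the constant $K$, whose utility $u(C)$ lies strictly below that of either claim. Concavification does not help either: it gives $\sup\mathbb{E}[U]=\sup\mathbb{E}[U_c]$ one measure at a time, not one strategy that works for every $z$ simultaneously. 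Taking the sup over $\mathcal{K}$ instead proves a different theorem.

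For comparison, the paper applies Lemma~\ref{minimax} directly to $(\pi,\mathcal{Q})$. It infers quasi-concavity in $\pi$ from linearity of $\pi\mapsto W_T^\pi$ and monotonicity of $U$ and $R$. The same example shows that inference fails for non-concave $U$. So your diagnosis is sharper than the paper's, but the missing lemma is absent from both.

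\textbf{What closes it.} You need an \emph{ancillary} randomisation inside $\mathbb{F}^Y$, together with the fact that the pricing measure $\widehat{\mathbb{P}}$ is the same for all models.
\begin{enumerate}
\item Under every $\mathbb{P}^z$, and under $\widehat{\mathbb{P}}$, the bridge $b_t:=Y_t-\tfrac{t}{T}Y_T$ is independent of $Y_T$, and its law does not involve $\theta$ (the drift cancels).
\item Since $b$ is generated by an i.i.d.\ Gaussian sequence, you can extract from it a Bernoulli$(\lambda)$ variable $\eta$ and a fresh bridge $\tilde b$, with $\eta,\tilde b,Y_T$ independent under each of these measures. Then $\tilde Y_t:=\tilde b_t+\tfrac{t}{T}Y_T$ has the law of $Y$ under each of them.
\item Write $W_T^{\pi_i}=G_i(Y)$ and set $X:=\eta\,G_1(\tilde Y)+(1-\eta)\,G_2(\tilde Y)$. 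This claim is $\mathcal{F}^Y_T$-measurable and nonnegative. Its cost is $\mathbb{E}^{\widehat{\mathbb{P}}}[e^{-rT}X]=\lambda\,\mathbb{E}^{\widehat{\mathbb{P}}}[e^{-rT}W_T^{\pi_1}]+(1-\lambda)\,\mathbb{E}^{\widehat{\mathbb{P}}}[e^{-rT}W_T^{\pi_2}]\le w$. It also satisfies $\mathbb{E}^{\mathbb{P}^z}[U(X)]=\lambda f_{\pi_1}(z)+(1-\lambda)f_{\pi_2}(z)$ for all $z$ at once.
\item The wealth equation $dW_t=rW_t\,dt+\sigma\pi_t\,dY_t$ does not involve $Z$. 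So martingale representation under $\widehat{\mathbb{P}}$, as in the proof of Theorem~\ref{the4.1}, gives $\pi\in\mathcal{A}(w)$ with $f_\pi\ge\lambda f_{\pi_1}+(1-\lambda)f_{\pi_2}$. Any unused budget can be added, which cannot lower $U$, and iteration handles finite mixtures.
\end{enumerate}
Thus every element of $\mathcal{K}$ is dominated by some $f_\pi$. By monotonicity of $R(\mathcal{Q},\cdot)$, your Sion argument then yields \eqref{eq-7-0}.

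Two smaller repairs are also needed.
\begin{itemize}
\item Weak lower semicontinuity of $\mathcal{Q}\mapsto\int f_\pi\,d\mathcal{Q}$ needs $f_\pi$ to be lower semicontinuous and bounded below on $\mathcal{S}$. This does not come from the sub-Gaussian condition on the prior. It follows by dominated convergence from $e^{\theta Y_T}\le e^{\underline{\theta}Y_T}+e^{\overline{\theta}Y_T}$, where $\underline{\theta},\overline{\theta}$ are the extreme values of $\theta=(z-r)/\sigma$ on $\mathcal{S}$, provided $U(W_T^\pi)$ is integrable under those two models.
\item On the non-compact side, Sion's argument only uses upper semicontinuity along segments. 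This holds because $R(\mathcal{Q},\cdot)$ is non-decreasing and right-continuous; you do not need the pointwise-convergence topology on $\mathcal{K}$.
\end{itemize}
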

This result allows us to solve the Bayesian portfolio problem conditional on a given prior $\mathcal{Q}$ first.  In a next step, we identify the optimal prior $\mathcal{Q}^*$ arising from the outer optimization problem. The optimal strategy $\pi^*$ is then characterized as the solution to the Bayesian problem evaluated under $\mathcal{Q}^*$.

The functional in \eqref{eq-7-0} accommodates in principle any choice of ambiguity aggregator $\phi$ and penalty {$R$}, allowing great flexibility in the problem formulation.
To illustrate, we consider the families of aggregators which are the most common in the ambiguity-literature (see, e.g., \cite{taboga2005portfolio, gollier2011portfolio, ju2012ambiguity}) and for each case, we indicate which part of Assumption~\ref{assume} is satisfied, thereby showing that the regularity requirements are, in practice, harmless. It is also worth noting that $\mathcal{P}$ and $\mathcal{Q}$ in the following example must be equivalent probability measures; otherwise, the Radon-Nikodym derivative is not well defined.
\begin{enumerate}
\item \textbf{Power aggregator} \((\lambda<1,\ \lambda\neq 0)\): \(\displaystyle \phi(x)=x^{\lambda}/\lambda\) for \(x>0\).  
The associated penalty is
\begin{equation}\label{eq:power}
    R(\mathcal{Q},s)=
(s)^+\,
\Bigl(
\mathbb{E}^{\mathcal{P}}\!\Bigl[
\bigl(\tfrac{d\mathcal{Q}}{d\mathcal{P}}\bigr)^{\frac{\lambda}{\lambda-1}}
\Bigr]
\Bigr)^{\!\frac{1-\lambda}{\lambda}},
\end{equation}
where $\mathbb{E}^{\mathcal{P}}$ denotes the expectation with respect to $\mathcal P$ and $(\cdot)^+=\max\{\cdot,0\}$.
Here, the exponent $1-\lambda$ is the relative ambiguity aversion (RAA) parameter, which measures the degree of ambiguity aversion; the limit \(\lambda\to-\infty\) yields max-min preferences, while \(\lambda\to 1\) corresponds to ambiguity neutrality.  
{For any fixed \(\mathcal{Q}\) the map \(s\mapsto R(\mathcal{Q},s)\) is affine on \((0,\infty)\) and thus continuous there.  
	Consequently, Assumption~\ref{assume}\,(ii) is satisfied whenever \(\mathbb{E}^{\mathbb{P}^{\mathcal Q}}[U(W_T^\pi)]>0\).} The optimal linear-payoff problem in \cite{bauerle2024optimal} involves a power aggregator. Their expression is derived using the dual representation of the $L^p$ norm, while our formulation relies on a general robust representation (see (\ref{eq3-1})).
\item \textbf{Logarithmic aggregator:}  
$\phi(x)=\log x$ for $x>0$. The penalty takes the exponential of relative entropy:
\[
R(\mathcal{Q},s)=
(s)^+\,
\exp\!\left(-\mathbb{E}^{\mathcal{P}}\left[
\log\!\frac{d\mathcal{Q}}{d\mathcal{P}}
\right]\right).
\]
{This form connects ambiguity aversion to the Kullback-Leibler divergence, penalising beliefs that are far from the prior.} {Because the logarithm is continuous and the inside expectation is finite for every \(s\in\mathbb{R}\), the map \(s\mapsto R(\mathcal{Q},s)\) is continuous on the entire real line.  
	Hence Assumption~\ref{assume}\,(i) holds automatically.}
\item \textbf{Exponential aggregator:}  
$\phi(x) = -e^{-\gamma x}$ with $\gamma>0$. Then the penalty function is 
\begin{equation}\label{eq:exp}
    R(\mathcal{Q},s) =
s + \frac{1}{\gamma}\,
\mathbb{E}^{\mathcal{Q}}\left[
\log\!\frac{d\mathcal{Q}}{d\mathcal{P}}
\right],
\end{equation}
where $\mathbb{E} ^\mathcal{Q}$ denotes expectation with respect to $\mathcal Q$. The higher the $\gamma$ (absolute ambiguity aversion parameter (AAA)), the more ambiguity-averse the agent is. This case yields an entropic penalty, often encountered in robust control and exponential utility frameworks. {Because the logarithm is continuous and the inside expectation is finite for every \(s\in\mathbb{R}\), the map \(s\mapsto R(\mathcal{Q},s)\) is continuous on the entire real line.  Hence Assumption~\ref{assume}\,(i) holds automatically.} Indeed, this is the standard representation for the entropic risk measure (see, e.g., \cite{follmer2011entropic}). Moreover, 
\[
	\inf_{\mathcal{Q}\in\mathcal{M}(\mathcal{S})}\bigg(\mathbb{E}^{\mathbb{P}^\mathcal{Q}}[U(W_T^\pi)]+\frac{1}{\gamma}\,
	\mathbb{E}^{\mathcal{Q}}\left[
	\log\!\frac{d\mathcal{Q}}{d\mathcal{P}}
	\right] \bigg)
\]
is actually of variational form introduced by \cite{maccheroni2006ambiguity}.
	\end{enumerate}
We note that the penalty functions induced by all standard smooth-ambiguity aggregators satisfy the continuity requirements in Assumption~\ref{assume}, and in particular the power, logarithmic, and exponential aggregators illustrated above, satisfy the assumptions of Theorem \ref{theorem3.1}. The minimax interchange is thus justified for the specifications most frequently used in applications. It is worth noting that, under commonly used choices of the aggregator $\phi$, the functional $R$ is linear in $\mathbb{E}^{\mathbb{P}^{\mathcal{Q}}}[\cdot]$. As a result, the problem reduces to a time-consistent Bayesian adaptive control problem, which can be solved using nonlinear filtering and martingale methods.

The results have a notable implication: the inner problem in \eqref{eq-7-0} is of the form 
\begin{equation*}
	\sup_{\pi \in \mathcal{A}(w)} R(\mathcal{Q}, \mathbb{E}^{\mathbb{P}^\mathcal{Q}}[U(W_T^\pi)] ).
\end{equation*}
which, under linearity of $R(Q,s)$ corresponds to that of an ambiguity-neutral decision-maker. Specifically, it becomes
\begin{equation}\label{eq:ambneutr}
    \sup_{\pi \in \mathcal{A}(w)} \mathbb{E}^{\mathbb{P}^\mathcal{Q}}[U(W_T^\pi)].
\end{equation}
This is the case of the three aggregator functions illustrated above. The approach then transforms the problem into two main steps. In the first, we solve a classical Bayesian decision problem in which ambiguity does not play any role and prior $\mathcal{Q}$ is updated through observations of market prices. This step has the theoretical implication of making the problem dynamically consistent. It also has the practical advantage that ambiguity-neutral problems are, in general, easier to solve than the corresponding ambiguous counterparts. Ambiguity preferences enter only through the outer minimization over priors. Thus, one is free to select alternative forms of the ambiguity functional and apply them to the solution. Overall, this points to greater tractability and modeling flexibility.

In the next section, we apply the approach to the study of the class of decision delegated portfolio management problems. Our goal is not only to characterize the optimal investment strategies, but also to identify the endogenous prior induced by the investor's ambiguity preferences.
 
\section{Application:\ Convex Incentive Contracts with Smooth Ambiguity Preferences}\label{sec4}

In Section \ref{sec:steps}, we present the steps implied by Proposition \ref{prop3.1} and Theorem \ref{theorem3.1}, which have a general validity beyond delegated portfolio management. In Section \ref{sec:4-2}, we specify the steps to delegated portfolio management.

\subsection{The Theoretical Steps} \label{sec:steps}

Given Problem (\ref{value}), we outline the analysis procedure in the following steps: 
\begin{enumerate}[label=\textbf{Step \arabic*:}, leftmargin=*]
   \item Apply Proposition~\ref{prop3.1}, to obtain a robust representation of the ambiguity-averse problem in the form of a max--min optimization.

\item Apply Theorem~\ref{theorem3.1}, to turn the max--min problem into a min--max problem. 
\item Choose a specification of the aggregator $\phi$, such that the functional $R$ is linear in its second argument, namely in $\mathbb{E}^{\mathbb{P}^{\mathcal{Q}}}[\cdot]$.

\item Conditional on a given prior $\mathcal{Q}$ solve the resulting ambiguity-neutral problem \eqref{eq:ambneutr} using nonlinear filtering and martingale methods, and derive the optimal trading strategy $\pi^*(\mathcal{Q})$.

\item We then solve the outer optimization problem over the set of priors to identify the worst-case prior $\mathcal{Q}^*$. The optimal trading strategy is obtained by evaluating $\pi^*(\mathcal{Q})$ at $\mathcal{Q}=\mathcal{Q}^*$.
\end{enumerate}

\subsection{Modeling Delegated Portfolio Management with Nonlinear Payoffs}\label{sec:4-2}

Delegated fund managers are often compensated through convex schemes, most prominently option-like contracts. In such contracts the payment consists of a fixed base fee plus a performance-based component that resembles a call option written on the managed fund and struck at a benchmark value at maturity (see, e.g., \cite{carpenter2000does}). We study the optimal investment policy of a manager who faces ambiguity about the return of a risky asset and is compensated through an option-based payoff, within the framework of smooth ambiguity preferences.
 This extends \cite{carpenter2000does} by allowing for ambiguity and lets us quantify how ambiguity aversion shapes portfolio choices.

 Specifically, the manager's payoff at time $T$ is $\delta \in (0,1]$ shares of a call option on the fund with strike $K$, plus a constant base $C \geq 0$. The payoff function $g$ is
\begin{equation}\label{eq:payoff}
    g(W_T^{\pi}):=\delta (W_T^{\pi}-K)^+ +C.
\end{equation}
The manager is ambiguity-averse with respect to the uncertain return of the risky asset and evaluates terminal wealth utility using a smooth ambiguity functional. Moreover, she updates her prior based on the information of stock prices.  Her decision problem is to determine
\begin{equation}\label{eq4-1}
	\sup_{\pi\in\mathcal A(w)}
	\phi^{-1}
	\left(
	\int_{\mathcal S}
	\phi \Bigl(\mathbb{E}^{\mathbb{P}^Z}\bigl[u\bigl(g(W_T^\pi)\bigr)\bigr]\Bigr)
	d\mathcal P(z)
	\right),
\end{equation}
which is a special case of \eqref{value} with $U(\cdot)=u\!\circ\! g(\cdot)$. 

\begin{figure}[htbp]
	\begin{minipage}[t]{0.47\textwidth}
		\centering
		\includegraphics[width=\textwidth]{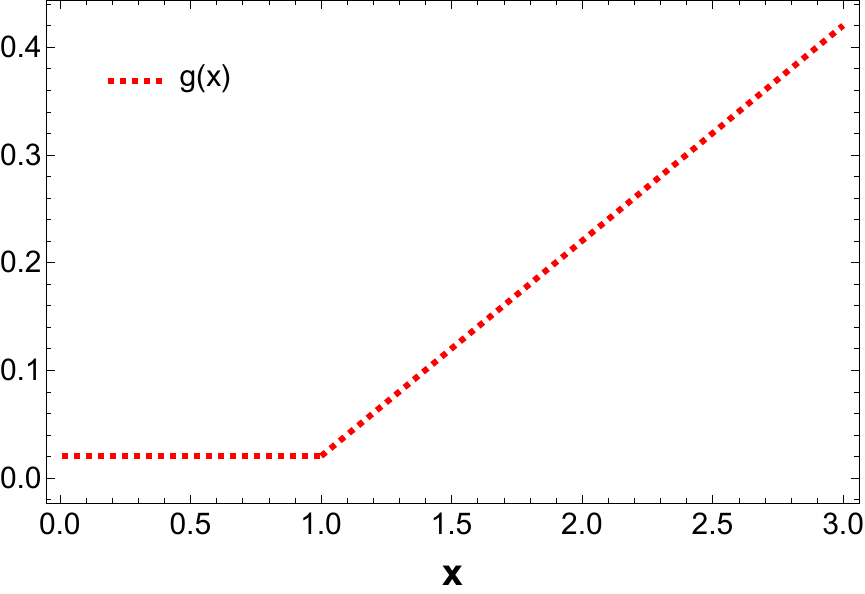}
		\caption{Manager's payoff function with $\alpha=0.5, \delta=0.2, K=1, C=0.02$.}
		\label{p1}
	\end{minipage}
	\hfill
	\begin{minipage}[t]{0.47\textwidth}
		\centering
		\includegraphics[width=\textwidth]{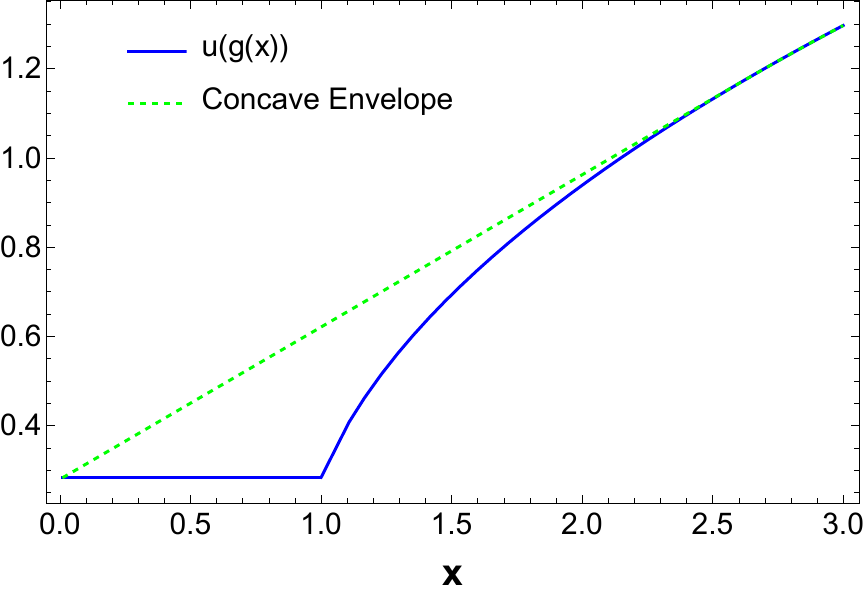}
		\caption{Manager's original and concavified objective functions with $\alpha=0.5, \delta=0.2, K=1, C=0.02$.}
		\label{p2}
	\end{minipage}
\end{figure}
\begin{example}
    Consider a power utility function of the form 
\begin{equation*}
    u(x)=x^\alpha/\alpha,
\end{equation*}
where $x\in[0,\infty)$, with $\alpha<1$, $\alpha\neq0$, so that
\begin{equation*}
   \mathrm{RRA}:=1-\alpha,
\end{equation*}
is the coefficient of relative risk aversion, together with the payoff function $g(\cdot)$ given in (\ref{eq4-1}). Figures~\ref{p1} and~\ref{p2} display the manager's payoff function $g(\cdot)$ (right panel, dotted red line), the induced utility $u\circ g(\cdot)$ (left panel, solid blue line), and its concave envelope (left panel, dashed green line) for selected parameter values. The figures illustrate that the payoff function is nonlinear and that the induced utility $U(\cdot)=u \circ g(\cdot)$ is not globally concave. Hence, the specification satisfies Definition~\ref{def:utility}.

\end{example}

We now adopt the steps highlighted in Section \ref{sec:steps} to solve the non-concave portfolio management problem in \eqref{eq4-1}. In Step 1, we apply Proposition \ref{prop3.1} to obtain the robust representation:
\begin{equation*}
	\sup_{\pi\in\mathcal{A}(w)}\;
	\inf_{\mathcal{Q}\in\mathcal{M}(\mathcal{S})}
	R\bigl(\mathcal{Q},\,\mathbb{E}^{\mathbb{P}^\mathcal{Q}}[u(g(W_T^\pi))]\bigr).
\end{equation*}
In Step 2, by Theorem \ref{theorem3.1} we obtain the following equivalent min-max optimization problem:
\begin{equation}\label{eq-10-1}
    \begin{aligned}
\inf_{\mathcal{Q}\in \mathcal{M}(\mathcal{S}) } \sup_{\pi \in \mathcal{A}(w)} R(\mathcal{Q}, \mathbb{E}^{\mathbb{P}^\mathcal{Q}}[u(g(W_T^\pi))] ).    
    \end{aligned}
\end{equation}
In Step 3, regarding the choice of the function $\phi$, we focus on two economically relevant cases, the power--power case in which both \(\phi\) and \(u\) are CRRA, and the exponential--power in which \(\phi\) is exponential and \(u\) is CRRA. These two specifications are widely used in the literature, including one-period asset allocation problems (e.g., \cite{taboga2005portfolio,gollier2011portfolio}) and consumption-based asset pricing models (e.g., \cite{ju2012ambiguity,collard2018ambiguity}). 

Formally, in the power-power case, we write $\phi(x)=\frac{x^\lambda}{\lambda}$, $\lambda<1$, $\lambda\neq0$, and $u(x)=\frac{x^\alpha}{\alpha}$, $\alpha<1$, $\alpha\neq0$. Then, by Equation \eqref{eq:power}, the decision-problem (\ref{eq-10-1}) becomes
	\begin{equation}\label{eq4-4}
	\inf_{\mathcal{Q}\in \mathcal{M}(\mathcal{S}) }\sup_{\pi\in\mathcal{A}(w)}\;
		\bigg(\mathbb{E}^{\mathbb{P}^\mathcal{Q}}[u(g(W_T^\pi))]\bigg)^+ \left(\mathbb{E}^{\mathcal{P}}\left[
		\left(\tfrac{d\mathcal{Q}}{d\mathcal{P}}\right)^{\frac{\lambda}{\lambda-1}}
		\right]\right)^{\!\frac{1-\lambda}{\lambda}}.		
	\end{equation}
    In the exponential-power case, we write $\phi(x)=-e^{-\gamma x}$, $\gamma>0$, and $u(x)=\frac{x^\alpha}{\alpha}$, $\alpha<1$, $\alpha\neq0$. Then, by Equation \eqref{eq:exp}, the decision-problem (\ref{eq-10-1}) becomes
	\begin{equation}\label{eq4-5}
	\inf_{\mathcal{Q}\in \mathcal{M}(\mathcal{S}) } \sup_{\pi\in\mathcal{A}(w)}\;
		\bigg\{ \mathbb{E}^{\mathbb{P}^\mathcal{Q}}[u(g(W_T^\pi))] + \frac{1}{\gamma}\,
		\mathbb{E}^{\mathcal{Q}}\left[
		\log\!\frac{d\mathcal{Q}}{d\mathcal{P}}
		\right] \bigg\}.
	\end{equation}	
Both \eqref{eq4-4} and \eqref{eq4-5} are linear in $\mathbb{E}^{\mathbb{P}^\mathcal{Q}}[u(g(W_T^\pi))]$. This allows us to solve the ambiguity neutral problem, which now takes the form
\begin{equation}\label{eq4-7}
	V(w;\mathcal Q):=	\sup_{\pi \in \mathcal{A}(w)} \mathbb{E}^{\mathbb{P}^\mathcal{Q}}[u(g(W_T^\pi))].
\end{equation}
Technically, this is a Bayesian adaptive problem with a nonlinear payoff $g$. In this respect, our approach parallels that of \cite{bauerle2024optimal} for the linear payoff case, extending it to option-type payoffs. We then use stochastic filtering theory (e.g., \cite{karatzas2001bayesian, rieder2005portfolio}) to reduce the partial-information problem with an unknown expected return to one with an adapted (observable) drift. This is achieved by updating priors based on the observed price process (see Appendix~\ref{filtering} for details). We subsequently solve the resulting complete-information problem using martingale methods (e.g., \cite[Ch.~3]{karatzas1998methods}) combined with the concavification principle.

Now we come to Step 4, related to the financial modeling part of the problem. 
Define the market price of risk by \(\Theta = (Z-r)/\sigma\), and denote its realization by \(\theta = (z-r)/\sigma\). 
Let \(I\) be the inverse of the marginal utility function \(u'\), given by \(I(x) := x^{\frac{1}{\alpha-1}}\), and
\[
\xi_T := \frac{e^{-rT}}{F(T,Y_T;\mathcal Q)}
\]
denote the state-price density, where \(Y_t\) is a \(\mathbb P\)-Brownian motion with drift \(\Theta\), and $F(t,y;\mathcal Q):= \int_{\mathcal S} \exp\!\left\{\theta y - \tfrac{1}{2}\theta^2 t \right\}
\,\mathcal Q(d\theta).$ The state-price density $\xi_T(\omega)$ represents the (discounted) price at $t=0$ of a unit payoff received in a given state $\omega$ at time $T$, reflecting both time discounting and the belief-adjusted pricing of risk. Here, a state refers to a realization of the underlying market uncertainty at time $T$. A large $\xi_T$ means that a unit payoff delivered in that state is expensive today: investors must pay more now to receive one unit of payoff there, which is why such states are interpreted as unfavorable states (see \cite{carpenter2000does} for a detailed discussion).

We then prove the following result regarding the optimal terminal wealth (denoted by $W^*_T$) and optimal investment strategy (denoted by $\pi_t^\ast$) for an ambiguity neutral decision-maker (see Appendix \ref{proofthe4.1} for the proof).

\begin{theorem}\label{the4.1}
	 With the above definitions, we have the following:
     \begin{enumerate}[label=(\roman*)]
     \item The optimal terminal wealth for the ambiguity-neutral decision-problem in \eqref{eq4-7} is
	\begin{equation}\label{eq4-3}
		W^*_T(\kappa^*_w, {\xi}_T)=\mathcal X(\kappa^*_w \xi_T)=
        \begin{cases}
           h(\kappa^*_w {\xi}_T), \quad  0<{\xi}_T<\frac{\widehat{y}}{\kappa^*_w},\\
           0, \quad \xi_T \geq \frac{\widehat{y}}{\kappa^*_w}, \end{cases}		
	\end{equation}
	where the function $h(x):=\dfrac{I(x/\delta)-C}{\delta}+K $ with $x\in[0,\infty)$, and we recall that $K$, $\delta$ and $C$ are defined in \eqref{eq:payoff}, and where the concavification point $\widehat{y}\in (0, \delta C^{\alpha-1})$ is the unique root of 
	\[
		u(g(h(y)))-u(C) =y h(y). 
\]
\item For any initial wealth $w>0$, $\kappa^*$ is the unique root of the budget constraint 
	\[
		\mathbb{E}^{\mathbb{P}}[\xi_T W^*_T(\kappa^*,\xi_T)]=w,
	\]
	where $\mathbb E^{\mathbb P}$ is the expectation with respect to $\mathbb P$ and we write $\kappa_w^* := \kappa^*(w)$ to emphasize its dependence on the initial wealth $w$.
    
    \item The optimal fraction invested in the stock at time $0<t\leq T$ for (\ref{eq4-7}) is 
{ \begin{equation}\label{eq4-9}
\frac{\pi_t^*}{W^*_t}
=
-\frac{\kappa^*_w e^{-rT}}{\sigma}
\frac{
\displaystyle
\int_{\mathbb{R}}
(\mathcal X)'\!\left(
\frac{\kappa^*_w e^{-rT}}{F(T,Y_t+z;\mathcal Q)}
\right)
\frac{\nabla F(T,Y_t+z;\mathcal Q)}
     {F(T,Y_t+z;\mathcal Q)^2}\varphi_{T-t}(z)dz
}{
\displaystyle
\int_{\mathbb{R}}
\mathcal X\!\left(
\frac{\kappa^*_w e^{-rT}}{F(T,Y_t+z;\mathcal Q)}
\right)
\varphi_{T-t}(z)\,dz
}.
\end{equation}
}

     \item The ambiguity-neutral value function at the optimum is
{	\begin{equation}\label{eqvalue}
V(w;\mathcal Q)
=
\int_{\mathbb{R}}
F(T,z;\mathcal Q)\,
u\!\left(
g\!\left(
\mathcal X\!\left(
\frac{e^{-rT}\kappa^*_w}{F(T,z;\mathcal Q)}
\right)
\right)
\right)
\varphi_T(z)\,dz .
\end{equation}
}
	where $\varphi_T$ is the density of $\mathcal{N}(0,T)$.
     \end{enumerate}
\end{theorem}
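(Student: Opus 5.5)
Under the prior $\mathcal Q$, I would first reduce \eqref{eq4-7} to a complete-market problem on the observation filtration $\mathbb F^S$. I would set $Y_t:=\frac1\sigma\log(S_t/s)-\frac{r}{\sigma}t+\frac{\sigma}{2}t$, so that $dY_t=\Theta\,dt+dB_t$ generates $\mathbb F^S$. Under a reference measure $\mathbb P_0$ on which $Y$ is a standard Brownian motion, Bayes' formula gives the likelihood ratio of the mixture $\mathbb P^{\mathcal Q}$ with respect to $\mathbb P_0$ on $\mathcal F^S_t$ as $F(t,Y_t;\mathcal Q)$. The filter is then $\widehat\Theta_t=\nabla F/F(t,Y_t;\mathcal Q)$ (gradient in $y$), and the innovation $\widehat B_t=Y_t-\int_0^t\widehat\Theta_s ds$ is an $\mathbb F^S$-Brownian motion. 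The sub-Gaussian condition \eqref{prior-integrability} (transported to $\mathcal Q$, or automatic since $\mathcal S$ is compact) ensures $F$ is finite and smooth and that these martingale identities hold. The wealth dynamics become $dW_t=(rW_t+\sigma\widehat\Theta_t\pi_t)dt+\sigma\pi_t d\widehat B_t$. This market is complete with respect to $\mathbb F^S$, and its state-price density is $e^{-rT}$ times the density of $\mathbb P_0$ with respect to $\mathbb P^{\mathcal Q}$, namely $\xi_T=e^{-rT}/F(T,Y_T;\mathcal Q)$. The static problem is therefore to maximize $\mathbb E^{\mathbb P^{\mathcal Q}}[U(X)]$ over $\mathcal F^S_T$-measurable $X\ge0$ subject to $\mathbb E^{\mathbb P^{\mathcal Q}}[\xi_T X]\le w$. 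Since $\mathbb P^{\mathcal Q}$ equals $F\cdot\mathbb P_0$, all expectations can be written against the Gaussian density $\varphi_T$. I read the $\mathbb E^{\mathbb P}$ in (ii) as this reference expectation, with the measure factors arranged consistently.

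Next I would handle the nonconcavity by concavification. With $U=u\circ g$, let $U_c$ be its concave envelope (Definition~\ref{def:concave-envelope}). A direct computation should show that $U_c$ is linear on $[0,\hat x]$, joining $(0,u(C))$ to the tangency point $(\hat x,U(\hat x))$, and coincides with $U$ beyond $\hat x>K$. On $x>K$ we have $U'(x)=\delta u'(\delta(x-K)+C)$, whose inverse is exactly $h$. Writing the tangency point as $\hat x=h(\hat y)$, the condition that the chord slope equals the derivative reads $u(g(h(\hat y)))-u(C)=\hat y\,h(\hat y)$. Uniqueness of the root in $(0,\delta C^{\alpha-1})$ I would get from monotonicity of $y\mapsto u(g(h(y)))-u(C)-yh(y)$, whose derivative is $-h(y)<0$ by the envelope relation. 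The endpoint signs follow because $h(\delta C^{\alpha-1})=K$ gives a negative value, while the value tends to $+\infty$ as $y\downarrow0$. The pointwise maximizer of $U_c(x)-yx$ is then $\mathcal X(y)=h(y)$ for $y<\hat y$ and $0$ for $y\ge\hat y$. Since $\mathcal X(y)\notin(0,\hat x)$, we get $U(\mathcal X(y))=U_c(\mathcal X(y))$. The standard duality argument then shows that $\mathcal X(\kappa\xi_T)$ is optimal for both the concavified and the original problem, provided $\kappa$ saturates the budget; the set $\{\kappa\xi_T=\hat y\}$ is null because $\xi_T$ has a continuous law. This proves (i). For (ii), the map $\kappa\mapsto\mathbb E[\xi_T\mathcal X(\kappa\xi_T)]$ is continuous and strictly decreasing, tends to $\infty$ as $\kappa\downarrow0$ and to $0$ as $\kappa\to\infty$, by dominated and monotone convergence using the growth \eqref{growth} and the integrability of $F$. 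This yields a unique root $\kappa^*_w$.

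For (iii) and (iv), I would write the optimal wealth as $W^*_t=\xi_t^{-1}\mathbb E[\xi_TW^*_T\mid\mathcal F^S_t]$. Under $\mathbb P_0$, $Y_T=Y_t+(Y_T-Y_t)$ with an independent $\mathcal N(0,T-t)$ increment, so the $F$ factors cancel and $e^{-r(T-t)}W^*_t=\int\mathcal X\big(\kappa^*e^{-rT}/F(T,Y_t+z)\big)\varphi_{T-t}(z)dz=:G(t,Y_t)$ up to the discount factor. By completeness, $\sigma\pi^*_t$ equals the $d\widehat B$ (equivalently $dY$) coefficient of $W^*$, which is $e^{r(t-T)}\partial_yG(t,Y_t)$. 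Differentiating under the integral, justified by the local Lipschitz property of $\mathcal X$ away from the single jump at $\hat y$ (a $\mathbb P_0$-null set of $z$) and by Gaussian tails, gives \eqref{eq4-9}. The jump contributes no term because it occurs on a Lebesgue-null set of $z$ for each fixed $Y_t$. Finally, (iv) follows by computing $\mathbb E^{\mathbb P^{\mathcal Q}}[U(W^*_T)]=\int F(T,z)\,u(g(\mathcal X(e^{-rT}\kappa^*/F(T,z))))\varphi_T(z)dz$.

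I expect the main obstacle to be the rigorous justification of the differentiation in (iii), given that $\mathcal X$ is discontinuous at $\hat y$, together with the admissibility of the replicating strategy (square-integrability of $\pi^*$). I would handle this by smoothing $\mathcal X$, or by splitting the $z$-integral at the boundary point where $\kappa^*e^{-rT}/F=\hat y$; the boundary terms should cancel because $W^*_t$ is a smooth conditional expectation for $t<T$.
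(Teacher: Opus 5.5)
Your treatment of (i), (ii) and (iv) follows the paper's proof. Your $\mathbb P_0$ is its $\widehat{\mathbb P}$ with $\Lambda_t=1/F(t,Y_t;\mathcal Q)$, and your concave-envelope computation is Lemma~\ref{convex} in another form, since $u\circ g$ and its envelope have the same conjugate. The gap is in (iii): you assert that the jump of $\mathcal X$ at $\widehat y$ contributes no term ``because it occurs on a Lebesgue-null set of $z$'', and this is false.

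The jump has size $h(\widehat y)>0$, and its location $z_0$, defined by $F(T,Y_t+z_0;\mathcal Q)=\kappa^*_w e^{-rT}/\widehat y$, moves with $Y_t$. Shifting $Y_t$ by $\varepsilon$ changes the integrand by $\pm h(\widehat y)$ on a $z$-interval of length $\varepsilon$. The difference quotients are therefore of size $h(\widehat y)/\varepsilon$ there, are not dominated, and leave a nonzero limit. Set $\psi_y(z):=\kappa^*_w e^{-rT}/F(T,y+z;\mathcal Q)$ and put the $y$-derivative on the Gaussian kernel (or split at $z_0$ and use Leibniz' rule). This gives
\[
\partial_y\int_{\mathbb{R}}\mathcal X(\psi_y(z))\,\varphi_{T-t}(z)\,dz=\int_{\mathbb{R}}\mathcal X'(\psi_y(z))\,\partial_y\psi_y(z)\,\varphi_{T-t}(z)\,dz+h(\widehat y)\sum_{z_0}s(z_0)\,\varphi_{T-t}(z_0).
\]
Here $\mathcal X'$ is the classical derivative, the sum runs over the crossing points $z_0$ (where $\psi_y(z_0)=\widehat y$), and $s(z_0)\in\{-1,+1\}$ is the sign of $\nabla F(T,y+z_0;\mathcal Q)$. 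If $\mathcal Q$ charges only $\theta>0$, as in the paper's numerical example, there is a single crossing and the extra term is strictly positive. It is the delta of the digital component $h(\widehat y)\mathbbm{1}_{\{\kappa^*_w\xi_T<\widehat y\}}$ of $W^*_T$.

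Your fallback does not repair this. The two Leibniz boundary terms at $z_0$ do not cancel: one side contributes $0$ and the other contributes $\pm h(\widehat y)\varphi_{T-t}(z_0)$. The smoothness of $W^*_t$ in $Y_t$ comes from the kernel, and it includes this term rather than excluding it.

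Consequently, with the classical derivative your argument produces, the right-hand side of \eqref{eq4-9} must be augmented by $h(\widehat y)\sum_{z_0}s(z_0)\varphi_{T-t}(z_0)$ divided by $\sigma\int_{\mathbb R}\mathcal X(\psi_{Y_t}(z))\varphi_{T-t}(z)\,dz$. Formula \eqref{eq4-9} as written is correct only if $(\mathcal X)'$ is read distributionally: the classical derivative plus a point mass of weight $-h(\widehat y)$ at $\widehat y$. Composed with $\psi_y$ at the transversal crossings, that point mass reproduces exactly the extra term. The paper's proof also differentiates under the integral sign without addressing the discontinuity. To close (iii), you should either adopt the distributional reading explicitly or state the boundary term.

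There are also three minor slips:
\begin{enumerate}[label=(\alph*)]
\item For $\alpha<0$, $u(g(h(y)))-u(C)-yh(y)$ tends to $-u(C)=-C^\alpha/\alpha>0$ as $y\downarrow0$, not to $+\infty$; the sign is still what you need.
\item It is $e^{r(T-t)}W^*_t$, not $e^{-r(T-t)}W^*_t$, that equals the Gaussian integral.
\item The atomlessness of $\xi_T$ that you use for continuity of the budget map fails for the degenerate prior concentrated on $Z=r$, where $F\equiv1$.
\end{enumerate}
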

A key qualitative implication of \eqref{eq4-3} is that the optimal terminal wealth is \emph{discontinuous} in the state-price density \(\xi_T\) and exhibits an ``all-or-nothing'' structure: whenever \(\xi_T \ge \widehat y/\kappa^*_w\), the optimal terminal payoff is zero. Here, ``zero payoff" means that the investor ends the horizon with zero terminal wealth in those states. In favorable financial scenarios (low \(\xi_T\)), the optimal trading policy resembles the classical Merton rule (see \cite{merton1971optimum}). In sufficiently unfavorable scenarios (high \(\xi_T\)), however, the optimal terminal wealth collapses to zero. Economically, high \(\xi_T\) corresponds to states in which delivering one unit of terminal payoff is particularly costly in present-value terms. Under a call-type payoff, compensation is truncated below the performance threshold \(K\), so additional resources allocated to such states do not increase \(g(W_T^{\pi})\) and hence yield no marginal benefit. The optimal policy therefore does not hedge these expensive states and instead reallocates resources toward inexpensive, favorable states, consistent with \cite{chen2024equivalence}.

Our framework subsumes both the frameworks of \cite{bauerle2024optimal} and \cite{carpenter2000does}. Setting \(K=0\), \(C=0\), and \(\delta=1\) in \eqref{eq:payoff} we recover the linear-payoff specification of \cite{bauerle2024optimal}. On the other hand, if \(Z\) is assumed to be a known constant, the optimal terminal wealth simplifies to
\begin{equation}\label{noambiguity}
		W^*_T(\kappa^*_w, {\xi}_T)=\mathcal X(\kappa^*_w \xi_T)=
        \begin{cases}
           h(\kappa^*_w {\xi}_T), \quad  0<{\xi}_T<\frac{\widehat{y}}{\kappa^*_w},\\
           0, \quad \xi_T \geq \frac{\widehat{y}}{\kappa^*_w}, \end{cases}		
	\end{equation}
where $\xi_T = \exp\{-rT - \tfrac{1}{2}\theta^2 T - \theta W_T\}.$ This specification recovers Theorem~1 in \cite{carpenter2000does}.

Finally (Step 5), we solve the outer optimization problem in \eqref{eq-10-1} over the set of priors to identify the worst-case prior \(\mathcal{Q}^*\) under the two proposed specifications. We start with the power-power case.

\begin{proposition}[Power--power specification] \label{prop4.1}
When  $$\phi(x)=\frac{x^\lambda}{\lambda},\ \text{with}\ \lambda<1, \lambda \neq 0,   $$ and $$u(x)=\frac{x^\alpha}{\alpha},\ \text{with} \  \alpha<1,\alpha\neq 0,$$
the optimal investment strategy $\pi^*$ satisfies \eqref{eq4-9} with optimal prior $\mathcal Q^*$  solves  
    \begin{equation}\label{eqpower}
    \begin{aligned}
            \inf_{\mathcal Q  \in \mathcal M(\mathcal S)}\Big(V(w;\mathcal Q)\Big)^+ \left(\mathbb{E}^{\mathcal{P}}\left[
		\left(\tfrac{d\mathcal{Q}}{d\mathcal{P}}\right)^{\frac{\lambda}{\lambda-1}}
		\right]\right)^{\!\frac{1-\lambda}{\lambda}} = \Big(V(w;\mathcal Q^*) \Big)^+ \left(\mathbb{E}^{\mathcal{P}}\left[
		\left(\tfrac{d\mathcal{Q^*}}{d\mathcal{P}}\right)^{\frac{\lambda}{\lambda-1}}
		\right]\right)^{\!\frac{1-\lambda}{\lambda}},
        \end{aligned}
    \end{equation}
	where $V(w;\mathcal Q)$ is given by \eqref{eqvalue}.
\end{proposition}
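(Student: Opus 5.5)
The plan is to read Proposition~\ref{prop4.1} as the outcome of Steps~1--5 in Section~\ref{sec:steps}, specialised to the power aggregator. By Proposition~\ref{prop3.1} and Theorem~\ref{theorem3.1}, problem \eqref{eq4-1} equals the min--max problem \eqref{eq-10-1}. Inserting the power penalty \eqref{eq:power} gives \eqref{eq4-4}. Before that, we must check that Assumption~\ref{assume}(ii) holds, which requires a positive expected utility. For $\alpha\in(0,1)$ we have $u\ge 0$, and $u(g(\cdot))\ge u(C)$, which is positive whenever $C>0$. For $\alpha<0$ we need $C>0$ so that $u(g)$ is finite, and then we argue directly with the positive part. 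In both cases $R(\mathcal Q,\cdot)$ is affine on the relevant range.

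Fix $\mathcal Q$, and write $D(\mathcal Q):=\big(\mathbb E^{\mathcal P}[(d\mathcal Q/d\mathcal P)^{\lambda/(\lambda-1)}]\big)^{(1-\lambda)/\lambda}$, which does not depend on $\pi$. The map $s\mapsto (s)^+ D(\mathcal Q)$ is non-decreasing, so
\[
\sup_{\pi}\big(\mathbb E^{\mathbb P^{\mathcal Q}}[u(g(W_T^\pi))]\big)^+D(\mathcal Q)
=\Big(\sup_\pi \mathbb E^{\mathbb P^{\mathcal Q}}[u(g(W_T^\pi))]\Big)^+D(\mathcal Q)
=\big(V(w;\mathcal Q)\big)^+D(\mathcal Q).
\]
Here the supremum is attained by the strategy of Theorem~\ref{the4.1}, and $V$ is given explicitly by \eqref{eqvalue}. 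Taking the infimum over $\mathcal Q$ yields the left-hand side of \eqref{eqpower}. For $\mathcal Q$ not equivalent to $\mathcal P$, we set the penalty to $+\infty$, or restrict to equivalent priors as the paper notes.

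The main obstacle is showing that the infimum is attained at some $\mathcal Q^*$. I would first use compactness: since $\mathcal S$ is compact, $\mathcal M(\mathcal S)$ is weakly compact. I would then prove lower semicontinuity of $\mathcal Q\mapsto (V(w;\mathcal Q))^+D(\mathcal Q)$ in two parts.
\begin{itemize}
\item $D$ is a convex-type $\varphi$-divergence functional. For $\lambda<1$ the exponent $\lambda/(\lambda-1)$ makes it lower semicontinuous in the weak topology, by the usual dual (Donsker--Varadhan type) representation as a supremum of continuous functionals. This is essentially property (iv) of Proposition~\ref{prop3.1}.
\item $V(w;\cdot)$ is a supremum over $\pi$ of functionals that are affine and continuous in $\mathcal Q$. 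Here we use that $\mathbb E^{\mathbb P^{\mathcal Q}}=\int \mathbb E^{\mathbb P^z}\,d\mathcal Q(z)$, that $z\mapsto \mathbb E^{\mathbb P^z}[u(g(W^\pi_T))]$ is continuous on compact $\mathcal S$, and dominated convergence, with bounds supplied by the sub-Gaussian condition \eqref{prior-integrability}. Hence $V(w;\cdot)$ is lower semicontinuous and convex.
\end{itemize}
The product of two nonnegative lower semicontinuous functions is lower semicontinuous, so a minimiser $\mathcal Q^*$ exists. If one is worried about continuity of $z\mapsto\mathbb E^{\mathbb P^z}[\cdot]$ for merely admissible $\pi$, I would instead use formula \eqref{eqvalue} directly. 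That formula is continuous in $\mathcal Q$ through $F(T,z;\mathcal Q)$ and $\kappa^*_w(\mathcal Q)$, the latter being continuous by the implicit function argument on the budget equation.

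Finally, the optimal strategy is identified via the saddle point behind Theorem~\ref{theorem3.1}. Under $\mathcal Q^*$, the inner problem is the ambiguity-neutral problem \eqref{eq4-7}, so $\pi^*$ is the Bayesian optimiser of Theorem~\ref{the4.1}(iii), namely \eqref{eq4-9} with $\mathcal Q=\mathcal Q^*$. To conclude that this $\pi^*$ also attains the outer supremum of the max--min problem, I would verify the saddle inequality using the equality of values in \eqref{eq-7-0} together with quasi-convexity of $R$ in $\mathcal Q$.
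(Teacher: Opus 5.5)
Your route is the paper's own. The paper gives no separate proof of Proposition~\ref{prop4.1}. It inserts the power penalty \eqref{eq:power} into the min--max form of Theorem~\ref{theorem3.1} to get \eqref{eq4-4}. It then uses that $s\mapsto (s)^+D(\mathcal Q)$ is nondecreasing, so the inner supremum is $(V(w;\mathcal Q))^+D(\mathcal Q)$ with $V$ from Theorem~\ref{the4.1}. Finally (Step~5) it evaluates \eqref{eq4-9} at the minimising prior, taking attainment and the saddle point from the proof of Theorem~\ref{theorem3.1}.

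Your compactness and lower-semicontinuity argument for the existence of $\mathcal Q^*$ goes beyond the paper and is sound for $0<\alpha<1$. Continuity of $z\mapsto\mathbb E^{\mathbb P^z}[u(g(W_T^\pi))]$ follows from the budget constraint \eqref{3-2} plus H\"older, not from the sub-Gaussian condition. It does require extending $D$ to all of $\mathcal M(\mathcal S)$ as the $f$-divergence it is. Setting it to $+\infty$ off the priors equivalent to $\mathcal P$ destroys lower semicontinuity when $\lambda<0$, because $D$ stays bounded along equivalent priors converging to one that vanishes on a set charged by $\mathcal P$. Note also that for $\alpha<0$ one has $u<0$, hence $(V)^+\equiv 0$. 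The criterion in \eqref{eqpower} then vanishes identically, every prior is a minimiser and every strategy is optimal. So ``arguing with the positive part'' yields nothing, and the statement has content only for $0<\alpha<1$ (a defect inherited from the paper).

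The genuine gap is the last step. Equality of the two values in \eqref{eq-7-0} plus quasi-convexity in $\mathcal Q$ does not make a maximiser of the inner problem under $\mathcal Q^*$ optimal for the max--min problem. Take $f(x,y)=xy$ on $[-1,1]^2$: the values coincide, $f$ is linear in $y$, and $y^*=0$ is the unique minimiser of $\sup_x f(x,y)=|y|$. Yet every $x$ maximises $f(\cdot,0)$, while only $x=0$ is max--min optimal.

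What is missing is uniqueness of the Bayesian optimiser under $\mathcal Q^*$, and it holds here. By the Lagrangian chain in the proof of Theorem~\ref{the4.1}, any optimal terminal wealth has binding budget and a.s.\ maximises $x\mapsto u(g(x))-\kappa^*_w\xi_T x$. That maximiser is the singleton $\mathcal X(\kappa^*_w\xi_T)$ off $\{\kappa^*_w\xi_T=\widehat y\}$. This set is null because $Y_T$ has a density and $F(T,\cdot;\mathcal Q^*)$ is strictly convex (for $\mathcal Q^*$ not concentrated at $\theta=0$).

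With uniqueness you can finish in either of two ways.
\begin{itemize}
\item Use that, when the values agree, the saddle set is $\arg\max_\pi\inf_{\mathcal Q}L\times\arg\min_{\mathcal Q}\sup_\pi L$, with $L$ as in the proof of Theorem~\ref{theorem3.1}. A max--min optimiser then forms a saddle point with $\mathcal Q^*$ and so equals $\pi^*(\mathcal Q^*)$. However, its existence is only asserted in Theorem~\ref{theorem3.1}, since Sion's theorem gives equality of values, not attainment on the $\pi$ side.
\item Argue directly. Write $D=1/N$ with $N(\mathcal Q)=\|d\mathcal Q/d\mathcal P\|_{L^{\lambda/(\lambda-1)}(\mathcal P)}$, which is concave in $\mathcal Q$ (power means of order below one are concave). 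Let $v$ be the optimal value and $\mathcal Q_\varepsilon=(1-\varepsilon)\mathcal Q^*+\varepsilon\mathcal Q$. Then $\varepsilon\mapsto\mathbb E^{\mathbb P^{\mathcal Q_\varepsilon}}[u(g(W_T^{\pi^*}))]-vN(\mathcal Q_\varepsilon)$ is convex and vanishes at $0$. Optimality of $\mathcal Q^*$, together with continuity in $\mathcal Q$ of the explicit optimiser $\mathcal X(\kappa^*_w\xi_T)$ (again because the tie set is null), makes its right derivative at $0$ nonnegative. This gives $L(\pi^*,\mathcal Q)\ge v$ for every $\mathcal Q$, so $\pi^*$ is max--min optimal.
\end{itemize}
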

Even though both our framework and \cite{bauerle2024optimal} employ a power aggregator \(\phi\) to model ambiguity aversion and power utility \(u\) to capture risk aversion, two key differences arise. First, we allow for a nonlinear payoff structure, which generates the endogenous ``all-or-nothing" pattern in optimal terminal wealth. Such behavior does not arise in \cite{bauerle2024optimal}, where the payoff is linear and the objective remains concave. This nonlinearity yields new economic insights, as illustrated in Section~\ref{sec:numerical}, particularly regarding how ambiguity aversion interacts with convex incentive contracts. Second, from a methodological perspective, we rely on a general robust representation rather than the dual representation of the \(L^p\) norm used in \cite{bauerle2024optimal}. This approach provides greater flexibility in the specification of ambiguity preferences and extends naturally to non-concave payoff environments.

We now consider the exponential-power specification.
\begin{proposition}[Exponential--power specification] \label{prop4.2}
	When  $$\phi(x)=-e^{-\gamma x},\ \text{with}\ \gamma>0,   $$ and $$u(x)=\frac{x^\alpha}{\alpha},\ \text{with} \  \alpha<1,\alpha\neq 0,$$ the optimal investment strategy $\pi^*$ satisfies \eqref{eq4-9} with optimal prior $\mathcal Q^*$  given by 
\begin{equation}\label{eqexp}
    \begin{aligned}
        	\inf_{\mathcal Q  \in \mathcal M(\mathcal S)}
		\bigg\{ V(w;\mathcal Q) + \frac{1}{\gamma}\,
		\mathbb{E}^{\mathcal{Q}}\left[
		\log\!\frac{d\mathcal{Q}}{d\mathcal{P}}
		\right] \bigg\}= \bigg\{ V(w;\mathcal Q^*) + \frac{1}{\gamma}\,
		\mathbb{E}^{\mathcal{Q}^*}\left[
		\log\!\frac{d\mathcal{Q}^*}{d\mathcal{P}}
		\right] \bigg\},
        \end{aligned}
\end{equation}
	where $V(w;\mathcal Q)$ is given by \eqref{eqvalue}.
\end{proposition}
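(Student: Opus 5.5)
The plan is to follow Steps 1--5 of Section~\ref{sec:steps} with $\phi(x)=-e^{-\gamma x}$, and to check that the inner supremum in \eqref{eq4-5} is exactly the ambiguity-neutral value $V(w;\mathcal Q)$ of Theorem~\ref{the4.1}.

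\emph{First step.} Proposition~\ref{prop3.1} gives the max--min representation with penalty \eqref{eq:exp}. For fixed $\mathcal Q$, the map $s\mapsto R(\mathcal Q,s)=s+\gamma^{-1}\mathbb E^{\mathcal Q}[\log(d\mathcal Q/d\mathcal P)]$ is affine. Hence it is continuous on $\mathbb R$, so Assumption~\ref{assume}(i) holds and Theorem~\ref{theorem3.1} yields the min--max form \eqref{eq4-5}.

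\emph{Second step.} For fixed $\mathcal Q$, the relative entropy term does not depend on $\pi$. We only need to treat $\mathcal Q\ll\mathcal P$ with finite entropy, since otherwise the penalty is $+\infty$ and such $\mathcal Q$ can be discarded from the infimum. Because $R$ is increasing and affine in $s$, the inner problem becomes
\[
\sup_{\pi\in\mathcal A(w)}\mathbb E^{\mathbb P^{\mathcal Q}}[u(g(W_T^\pi))]+\tfrac1\gamma\mathbb E^{\mathcal Q}\big[\log\tfrac{d\mathcal Q}{d\mathcal P}\big]=V(w;\mathcal Q)+\tfrac1\gamma\mathbb E^{\mathcal Q}\big[\log\tfrac{d\mathcal Q}{d\mathcal P}\big].
\]
Theorem~\ref{the4.1}(iv) then gives $V(w;\mathcal Q)$ explicitly through \eqref{eqvalue}, and the maximizer is $\pi^*(\mathcal Q)$ from \eqref{eq4-9}. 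Substituting into \eqref{eq4-5} reduces the outer problem to the infimum on the left of \eqref{eqexp}.

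\emph{Main obstacle: attainment of the infimum.} I need a minimizer $\mathcal Q^*$ and must show that $\pi^*(\mathcal Q^*)$ is optimal for the original problem. For existence, $\mathcal M(\mathcal S)$ is weakly compact because $\mathcal S$ is compact, and relative entropy is weakly lower semicontinuous. The remaining issue is lower semicontinuity of $\mathcal Q\mapsto V(w;\mathcal Q)$. The approach here is to note that $V(w;\cdot)$ is a supremum of functionals that are affine in $\mathcal Q$, namely $\mathcal Q\mapsto\int\mathbb E^{\mathbb P^z}[u(g(W^\pi_T))]\,\mathcal Q(dz)$. Each is weakly continuous whenever $z\mapsto\mathbb E^{\mathbb P^z}[\cdot]$ is continuous and bounded on the compact set $\mathcal S$. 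Continuity comes from the Girsanov densities; boundedness comes from the growth condition \eqref{growth} together with the sub-Gaussian condition \eqref{prior-integrability}. A supremum of continuous functionals is lower semicontinuous, which is exactly what is needed. The sum is then l.s.c.\ on a compact set, so a minimizer $\mathcal Q^*$ exists.

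\emph{Conclusion via a saddle point.} The equality in Theorem~\ref{theorem3.1} together with attainment gives a saddle point $(\pi^*(\mathcal Q^*),\mathcal Q^*)$. Concretely, the objective is linear in $\mathcal Q$ plus a convex penalty, and the inner problem is concave after concavification, because $\mathbb E^{\mathbb P^{\mathcal Q}}[u_c(g(\cdot))]$ attains the same value. This lets me argue that $\pi^*(\mathcal Q^*)$ attains the sup--inf value. Therefore $\pi^*$ satisfies \eqref{eq4-9} evaluated at $\mathcal Q^*$, which is the claim.
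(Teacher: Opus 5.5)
Your first two steps match the paper's argument. The paper gives no separate proof: it obtains the proposition by running Steps~1--5, namely Proposition~\ref{prop3.1} with the entropic penalty \eqref{eq:exp}, Assumption~\ref{assume}(i) because $R(\mathcal Q,\cdot)$ is affine, Theorem~\ref{theorem3.1}, and Theorem~\ref{the4.1} for the inner problem. Your existence argument for $\mathcal Q^*$ is a sound addition; the paper only gets attainment from the saddle-point assertion at the end of the proof of Theorem~\ref{theorem3.1}. With $f_\pi(z):=\mathbb E^{\mathbb P^z}[u(g(W_T^\pi))]$ and $\mathcal A(w)$ independent of $\mathcal Q$, the map $V(w;\mathcal Q)=\sup_\pi\int f_\pi\,d\mathcal Q$ is indeed weakly l.s.c.\ once each $f_\pi$ is continuous on $\mathcal S$. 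Two corrections there:
\begin{itemize}
\item Condition \eqref{prior-integrability} plays no role, since every prior on the compact $\mathcal S$ is sub-Gaussian.
\item Continuity of $f_\pi$ comes from writing $f_\pi(z)=\mathbb E^{\widehat{\mathbb P}}[e^{\theta Y_T-\theta^2T/2}u(g(W_T^\pi))]$ with $\theta=(z-r)/\sigma$, plus uniform integrability. For $\alpha<0$ this follows from $u(C)\le u(g)<0$, which needs $C>0$. For $0<\alpha<1$ it follows from H\"older and the budget bound \eqref{3-2}, not from the growth condition alone.
\end{itemize}

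The gap is in your last step. Value equality plus attainment of the outer infimum does not give a saddle point $(\pi^*(\mathcal Q^*),\mathcal Q^*)$. With $L$ as in the proof of Theorem~\ref{theorem3.1}, you must show $\inf_{\mathcal Q}L(\pi^*(\mathcal Q^*),\mathcal Q)=L(\pi^*(\mathcal Q^*),\mathcal Q^*)$, i.e.\ that $\mathcal Q^*$ is a worst case against $\pi^*(\mathcal Q^*)$. A best response to a minimizing prior need not be max--min optimal: in matching pennies every pure strategy is a best response to the opponent's optimal mix, yet none is max--min optimal. Your concavification remark concerns only the inner maximization for fixed $\mathcal Q$, so it does not address this.

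The missing ingredient is a.s.\ uniqueness of the Bayesian optimizer. $\mathcal X$ is single-valued off $y=\widehat y$, and $\kappa^*_w\xi_T$ has an atomless law unless $\mathcal Q$ is the point mass at $z=r$. Hence $\mathcal X(\kappa^*_w\xi_T)$ is the unique optimal terminal wealth, and by martingale representation $\pi^*(\mathcal Q)$ is the unique optimal strategy. With this you can close the argument in either of two ways.
\begin{enumerate}
\item[(a)] Take a max--min strategy $\bar\pi$, whose existence the paper asserts in the proof of Theorem~\ref{theorem3.1}. Value equality makes $(\bar\pi,\mathcal Q^*)$ a saddle point, so $\bar\pi$ maximizes $L(\cdot,\mathcal Q^*)$, and uniqueness forces $\bar\pi=\pi^*(\mathcal Q^*)$. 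This is what the paper implicitly relies on.
\item[(b)] A self-contained route. For $\mathcal Q_\varepsilon=\mathcal Q^*+\varepsilon(\mathcal Q-\mathcal Q^*)$ one has $\varepsilon\int f_{\pi^*(\mathcal Q^*)}\,d(\mathcal Q-\mathcal Q^*)\le V(w;\mathcal Q_\varepsilon)-V(w;\mathcal Q^*)\le\varepsilon\int f_{\pi^*(\mathcal Q_\varepsilon)}\,d(\mathcal Q-\mathcal Q^*)$. Uniqueness plus continuity in $\mathcal Q$ of the optimizer then gives the envelope derivative $\int f_{\pi^*(\mathcal Q^*)}\,d(\mathcal Q-\mathcal Q^*)$. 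Nonnegativity of directional derivatives at the minimizer $\mathcal Q^*$ therefore transfers to the convex map $\mathcal Q\mapsto\int f_{\pi^*(\mathcal Q^*)}\,d\mathcal Q+\gamma^{-1}\mathbb E^{\mathcal Q}[\log(d\mathcal Q/d\mathcal P)]$. So $\mathcal Q^*$ minimizes that map; it is the Gibbs measure $d\mathcal Q^*/d\mathcal P\propto e^{-\gamma f_{\pi^*(\mathcal Q^*)}}$. Hence $(\pi^*(\mathcal Q^*),\mathcal Q^*)$ is a saddle point, which even yields the minimax equality without Theorem~\ref{theorem3.1}.
\end{enumerate}
As written, your argument identifies the optimal value but not the optimal strategy.
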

Equations \eqref{eqpower} and \eqref{eqexp} imply that solving the original smooth ambiguity problem in \eqref{eq4-1} amounts to characterizing the optimal investment strategy in \eqref{eq4-9} for a given prior $\mathcal{Q}$ and identifying the corresponding optimal prior $\mathcal{Q}^*$. Although we focus on two specific ambiguity aggregators, the structure of the formulation makes the framework readily extensible. Once the ambiguity-neutral benchmark in \eqref{eq4-7} is characterized, alternative ambiguity specifications can be chosen by the analyst and incorporated without difficulty. This modular structure represents a key methodological advantage, enabling different ambiguity attitudes to be imposed in a tractable and systematic manner.

Obtaining a closed-form expression for $\mathcal{Q}^*$ is generally out of reach due to the difficulty of determining $V(w;\mathcal{Q})$ in \eqref{eqvalue}. However, the solution can be found numerically. In the next section, we use computational experiments to analyze how ambiguity aversion affects the optimal investment strategies.

\section{Numerical Experiments and Comparative Statics}\label{sec:numerical} 
In this section, we quantify how ambiguity aversion shapes the manager's priors, terminal wealth, and optimal investment strategies, and we relate the key comparative statics to the theoretical results in Sections~\ref{sec_SAP} and~\ref{sec4}. Solving the two outer minimization problems over the set of priors also allows us to compare the distorted priors implied by the different ambiguity attitudes, offering a further interpretation perspective for Propositions \ref{prop4.1} and \ref{prop4.2}. All computations in this section are performed using Mathematica 13.3. The code is available upon request.

\paragraph{Ambiguity Absence and the Effects of Nonlinear Payoffs.} We start by considering the problem in the absence of ambiguity, which we will also use as a reference benchmark. Specifically, we assume that the expected return Z is known to the manager and equal to $z_0$. 
\begin{table}[htbp]
	\centering
    \caption{Basic parameter set in the delegated portfolio management.}
	\label{tab1}	
	\begin{tabular}{ccccccccc}
		\toprule
		Parameter&$z_0$ &  $r$	& $\sigma$  & $\delta$ & $K$ & $C$ & $T$  \\ 
		\midrule
		Value & 0.078& 0.02 & 0.3 & 0.2 & 1 & 0.02 & 10 \\
		\bottomrule
	\end{tabular}
\end{table}
Table~\ref{tab1} reports the financial model parameters. We set $z_0=0.078$, the risk-free rate to $r = 0.02$, and the volatility to $\sigma = 0.3$, which together imply a reasonable Sharpe ratio of approximately $20\%$. The compensation scheme is specified as 
\[
g(W_T^\pi) = 0.2 (W_T^\pi - 1)^+ + 0.02,
\]
representing an option-based payoff. The slope $\delta = 0.2$ captures a moderate incentive intensity, while the fixed component $C = 0.02$ ensures participation and provides downside protection. The contract horizon is set to $T = 10$ years.

We first study the relationship between the optimal terminal wealth $W^*_T$ and the state-price density $\xi_T$. 
Implementing \eqref{noambiguity}, we compute the optimal terminal wealth numerically.  Figure~\ref{fig3} displays the relationship between optimal terminal wealth (vertical axis, $W_T^\ast$) and the state price density (horizontal axis, $\xi_T$) for three levels of relative risk aversion (RRA): $\mathrm{RRA}=0.5$ (dashed, red), $\mathrm{RRA}=0.6$ (dotted, green), and $\mathrm{RRA}=0.7$ (solid, blue).
\begin{figure}[htbp]
	\centering
	\begin{minipage}[t]{0.47\textwidth}
		\includegraphics[width=\textwidth]{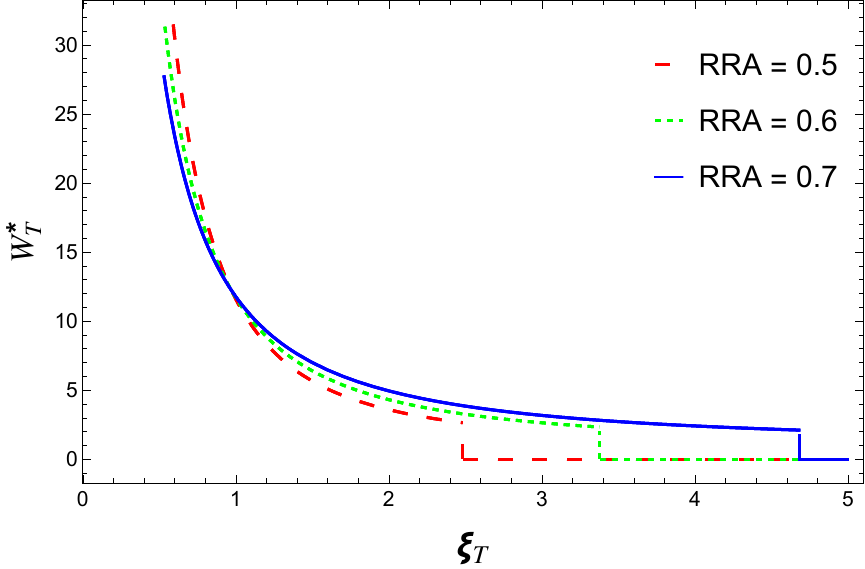}
		\caption{Optimal terminal wealth $W_T^*$ as a function of $\xi_T$ for three relative risk aversion ($\mathrm{RRA=1-\alpha}$) levels (initial wealth $w=10$).}
		\label{fig3}	
	\end{minipage}
	\hfill
	\begin{minipage}[t]{0.47\textwidth}
		\includegraphics[width=\textwidth]{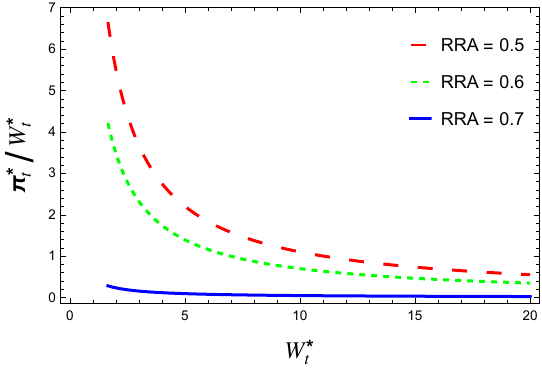}
		\caption{Optimal risky proportion vs.\ wealth $W^*_t$ for three relative risk aversion ($\mathrm{RRA=1-\alpha}$) levels (initial wealth $w=10$, $t=T-1$).}
		\label{fig4-1}
	\end{minipage}
\end{figure}

The results in Figure~\ref{fig3} show that $W^*_T$ is discontinuous in $\xi_T$ for all three levels of relative risk aversion: it decreases until $\xi_T$ reaches $\widehat y/\kappa^*_w$, after which it drops to zero. This kink reflects that, once the option is sufficiently out of the money, additional wealth barely changes the compensation, so it is optimal for the manager to cease allocating resources to these states. Intuitively, in states with high state-price density (``expensive'' states), extra wealth hardly increases compensation because the option-style payoff is locally flat. The investor therefore shifts resources away from these expensive states toward low-$\xi_T$ (``inexpensive'') states where each dollar has more impact. This ``all-or-nothing'' structure in $W^*_T$ is a direct effect of non-concavity.
Moreover, Figure~\ref{fig3} shows that greater risk aversion shifts the kink to the right. Greater risk aversion raises the effective penalty on dispersion across states. As a result, in favorable (low-$\xi_T$) states, a more risk-averse investor chooses a lower payoff than a less risk-averse one, while in unfavorable states (high-$\xi_T$) the zero-payoff region shrinks. Thus higher RRA compresses the cross-state wealth distribution, replacing ``lottery-like'' payoffs by more concave profiles, consistent with the concavification logic in Theorem~\ref{the4.1}.

Regarding the trading strategy, Figure~\ref{fig4-1} plots the optimal trading ratio (vertical axis, $\pi_t^\ast / W_t^\ast$) as a function of current wealth (horizontal axis, $W_t^\ast$) at time $t = T-1 = 9$ in the absence of ambiguity. As expected, greater risk aversion lowers the fraction of wealth invested in the risky asset. Quantitatively, a 40\% increase in RRA reduces the risky share by roughly one third at intermediate wealth levels, illustrating that the non-concave payoff preserves the standard risk-return trade-off but amplifies its impact near the extremes.

\paragraph{Ambiguity Aversion Interacts with Nonlinear Payoffs.} To study the impact of ambiguity presence, we consider that the decision-maker is no longer certain about the underlying value of $Z$ and models this uncertainty, allowing $Z$ to become a random variable over two scenarios, an \textit{unfavourable} scenario over which $Z=z_1$ and a  \textit{favourable} one with $Z=z_2$. The decision-maker assigns a prior $\mathcal P$ over these realizations, with $\mathcal P(Z=z_1)=q$, and $\mathcal P(Z=z_2)=1-q$. Numerically, we assign  $z_1=0.03$ and $z_2=0.09$, and $\mathcal P(Z=z_1)=0.2 $ and $\mathcal P(Z=z_2)=0.8$, so that $\mathbb E^{\mathcal P}[Z]=0.078=z_0$, the benchmark return $z_0$ in the ambiguity-neutral experiments. Thus, the decision-maker at time $t=0$  is optimistic about the possibility of the more favourable scenario, assigning it an 80\% probability, against a 20\% chance of the lower return scenario.

The first effect of ambiguity preferences is a distortion of the decision-maker's belief about the unknown parameter $Z$. This amounts to understanding how the distorted prior implied by smooth ambiguity preferences (Equation \eqref{eq-10-1}).

\begin{table}[htbp]
	\centering
    \caption{Optimal $q^*$ for different RAA levels in the power-power case (initial wealth $w=10$, $\mathrm{RRA = 0.5}$)}
	\label{tab3}	
    \begin{tabular}{ccccccccccc}
		\toprule
		$1-\lambda$ (RAA) & 0.01 & 0.02 & 0.04& 0.1 & 0.3 & 0.7 & 1.2 & 1.7 &2.0&2.2 \\ 
		\midrule
		$q^*$ & 0.733 & 0.573 & 0.413 &0.238 &0.168 & 0.127 &  0.102 & 0.091&0.087&0.086 \\
		\bottomrule
	\end{tabular}
\end{table}

We begin with the power-power specification in Proposition~\ref{prop4.1}. Under ambiguity aversion, the distorted belief $\mathcal Q^\ast$ assigns probability $q^\ast := \mathcal Q^\ast(Z=z_2)$ and $1-q^\ast = \mathcal Q^\ast(Z=z_1)$, where the distorted belief $\mathcal Q^\ast$ arises endogenously from the outer maximization problem in~\eqref{eqpower}.
The resulting numerical value depends on the relative ambiguity aversion $1-\lambda$. Table~\ref{tab3} reports the values of $q^\ast$ after solving Problem \eqref{eqpower} for increasing values of $1-\lambda$, i.e., for higher ambiguity aversion. To illustrate, at $1-\lambda=0.01$, we have $q^\ast=0.733$, while at $1-\lambda=2.2$ we find $q^\ast=0.086$. Because $q^\ast$ represents the probability of the favorable state, the values in Table~\ref{tab3} show that decision-makers shift their belief toward considering the bad state more likely as ambiguity aversion increases. This distortion should imply a cautious approach in the face of uncertainty.

\paragraph{Sensitivity Analysis: A $2^3$ Factorial Design.}
We quantify the sensitivity of belief distortions using a $2^3$ full-factorial design that varies (A) the baseline prior over the good state (low $q=0.5$ vs.\ high $q=0.8$), (B) relative risk aversion (RRA $=0.3$ vs.\ $0.5$), and (C) relative ambiguity aversion (RAA $=0.01$ vs.\ $0.3$).
For each of the eight configurations, we solve the Problem \eqref{eqpower} and record the optimal distorted good-state probability $q^*$. Table~\ref{table4} reports the implied decomposition and Table \ref{table5} reports the eight values in Appendix \ref{design}.

\begin{table}[htbp]
\centering
\caption{Factorial effects on the distorted good-state probability $q^*$}
\label{table4}
\begin{tabular}{lcc}
\hline
 & Effect on $q^*$ & Economic interpretation \\
\hline
Prior optimism (A) & $+0.160$ & Anchors distorted beliefs toward the good state \\
Risk aversion (B) & $+0.035$ & Small impact \\
Ambiguity aversion (C) & $-0.447$ & Pessimistic tilt: shifts mass to the bad state \\
\hline
A $\times$ B & $+0.001$ & Negligible \\
A $\times$ C & $-0.113$ & Ambiguity attenuates prior anchoring \\
B $\times$ C & $+0.021$ & Weak interaction \\
A $\times$ B $\times$ C & $+0.033$ & Small higher-order effect \\
\hline
\end{tabular}
\end{table}

Three patterns emerge. First, ambiguity aversion is the dominant force shaping distortions: moving from low to high ambiguity aversion reduces $q^*$ by about $0.45$ on average, implying a strong pessimistic tilt (probability mass shifts from the good to the bad state). Second, a more optimistic prior increases $q^*$ (about $0.16$ on average), indicating that distorted beliefs remain anchored to baseline information. Third, ambiguity aversion attenuates the role of the prior: the negative $A\times C$ interaction shows that prior optimism translates into a smaller increase in $q^*$ when ambiguity aversion is high. Risk aversion has only a minor direct or interactive impact. Overall, belief distortions are driven primarily by ambiguity attitudes, while prior information matters mainly when ambiguity concerns are weak.

\begin{figure}[htbp]
	\centering
	\begin{minipage}[t]{0.47\textwidth}
			\centering
			\includegraphics[width=\textwidth]{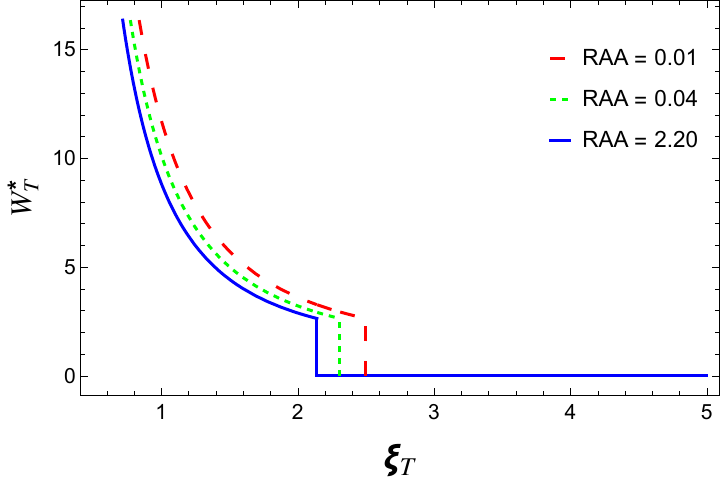}
			\caption{Optimal terminal wealth $W_T^*$ as a function of $\xi_T$ for three ambiguity aversion levels ($\mathrm{RAA=1-\lambda}$) in power-power case (initial wealth $w=10$, $\mathrm{RRA = 0.5}$).}
			\label{fig5}
		\end{minipage}		
	\hfill
	\begin{minipage}[t]{0.47\textwidth}
		\includegraphics[width=\textwidth]{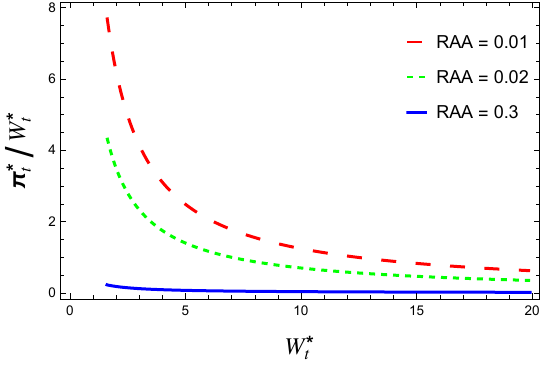}
		\caption{Optimal risky proportion vs.\ wealth $W^*_t$ for three ambiguity aversion levels ($\mathrm{RAA=1-\lambda}$) in the power-power case (initial wealth $w=10$, $\mathrm{RRA = 0.5}$, $t=T-1$).}
		\label{fig:pit}
	\end{minipage}
\end{figure}
We can now examine the impact of ambiguity preferences on the optimal terminal wealth $W_T^\ast$ and trading strategy $\pi_t^\ast$. In Figure~\ref{fig5}, we display the optimal terminal wealth (vertical axis, $W_T^\ast$) against the state-price density (horizontal axis, $\xi_T$) for three levels of ambiguity aversion: 
$\mathrm{RAA}=0.01$ (dashed, red), $\mathrm{RAA}=0.04$ (dotted, green) and $\mathrm{RAA}=2.20$ (solid, blue). Figure \ref{fig5} shows that $W_T^*$ decreases with $\xi_T$ until it reaches the cutoff $\widehat y/\kappa^*_w$, after which it drops to zero for all three values of RAA. This behavior is similar to the one in the no-ambiguity case. Ambiguity aversion has two effects. First, stronger ambiguity aversion shifts the cutoff to the left, and, second, it makes the curve $W_T^*$ systematically lower than the corresponding curve with weaker ambiguity aversion. Overall, the region in which the option is ``effectively in the money'' shrinks as beliefs become more pessimistic, and the manager accepts zero payoff already at intermediate $\xi_T$. Intuitively, a more ambiguity-averse manager assigns a higher weight to adverse drifts, which raises the shadow price $\kappa^*_w$, so the interior solution $W_T^*=\mathcal X(\kappa^*_w\xi_T)$ falls and the zero-payoff region expands. This mechanism highlights that ambiguity aversion acts as a disciplining force: it tightens effective risk limits through endogenous belief distortion, even though the underlying contract remains unchanged. This channel differs from the pure effect of risk aversion documented in Figure~\ref{fig3}, which operates solely through utility curvature.

Figure \ref{fig:pit} plots the optimal investment ratio (vertical axis, $\pi_t^*/W^*_t$) as a function of current wealth (horizontal axis, $W^*_t$) for alternative ambiguity attitudes. The fraction invested in the risky asset declines with wealth regardless of the attitude. This downward slope reflects the concavified objective: as wealth rises, the marginal value of the additional upside diminishes, and the manager behaves more cautiously despite the convex contract. In addition, higher ambiguity aversion leads to a lower allocation to the risky asset. To illustrate, in Figure~\ref{fig:pit} the allocation corresponding to $\mathrm{RAA}=0.01$ (dashed, red) lies uniformly above that for $\mathrm{RAA}=0.02$ (dotted, green), which in turn lies above that for $\mathrm{RAA}=0.3$ (solid, blue) and converges rapidly to zero. For even higher values of the relative ambiguity aversion ($\mathrm{RAA}>0.3$), the risky share is always substantially lower and converges rapidly toward zero. Quantitatively, moving from low to high ambiguity aversion roughly halves the risky share at intermediate wealth levels. The explanation is that an increase in wealth reduces marginal utility, and ambiguity aversion shifts weight toward adverse scenarios. Both these effects discourage risky exposure. Thus, ambiguity aversion not only distorts beliefs but also induces more conservative dynamic trading, mitigating the risk-shifting incentives generated by the option-based contract.
\begin{table}[htbp]
	\centering
    \caption{Optimal $q^*$ for different AAA levels in the exponential-power case (initial wealth $w=10$, $\mathrm{RRA = 0.5}$)}
    \label{tab2}
    \begin{tabular}{ccccccccccc}
		\toprule
		AAA ($\gamma$) & 0.01&0.1 & 0.5 & 1 & 2 &3& 4 & 8 & 12 & 15 \\ 
		\midrule
		$q^*$ & 0.799 & 0.791&0.751 & 0.694 & 0.563 & 0.423& 0.294 & 0.108 & 0.094 & 0.090 \\
		\bottomrule
	\end{tabular}
\end{table}

Now we repeat the analysis for the exponential-power case of Proposition \ref{prop4.2}. We start with the probability distortion. Solving~\eqref{eqexp} yields $q^*=\mathcal{Q}^*(Z=z_2)$ and $1-q^*=\mathcal{Q}^*(Z=z_1)$ for different absolute ambiguity aversion (AAA) levels $\gamma$ (Table~\ref{tab2}). As $\gamma$ increases, $q^*$ decreases, meaning that the manager assigns increasingly higher probability to the worst state. Numerically, $q^*$ declines from $0.799$ to $0.090$ as $\gamma$ rises from $0.01$ to $15$, more than tripling the probability mass of the bad state. 
\begin{figure}[htbp]
	\centering
	\begin{minipage}[t]{0.47\textwidth}
			\includegraphics[width=\textwidth]{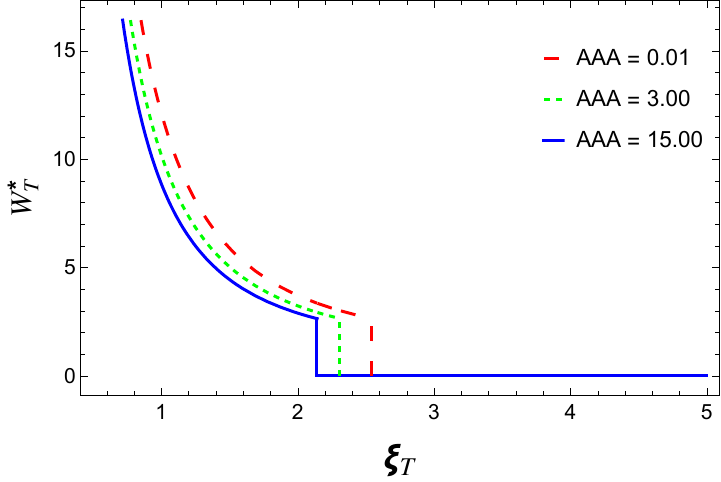}
			\caption{Optimal terminal wealth $W_T^*$ as a function of $\xi_T$ for three ambiguity aversion levels ($\mathrm{AAA=\gamma}$) in exponential-power case (initial wealth $w=10$, $\mathrm{RRA = 0.5}$).}
		\label{fig8}	
		\end{minipage}
	\hfill
    \begin{minipage}[t]{0.47\textwidth}
		\includegraphics[width=\textwidth]{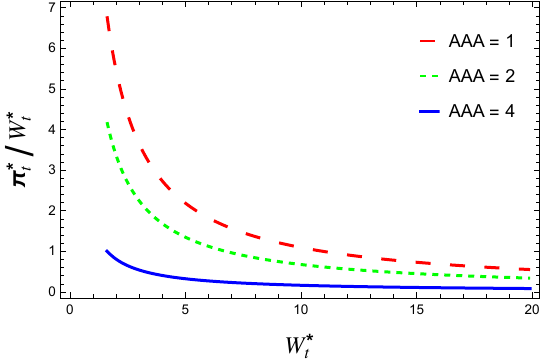}
		\caption{Optimal risky proportion vs.\ wealth $W^*_t$ for different ambiguity aversion levels ($\mathrm{AAA=\gamma}$) in the exponential-power case (initial wealth $w=10$, $\mathrm{RRA = 0.5}$, $t=T-1$).}
		\label{fig:piq}
	\end{minipage}	
	\end{figure}

 Similar to the power--power specification, Figure~\ref{fig8} plots optimal terminal wealth (vertical axis, $W_T^\ast$) against the state-price density (horizontal axis, $\xi_T$) for three levels of ambiguity aversion, $\mathrm{AAA}=0.01$ (dashed, red),  $\mathrm{AAA}=3$ (dotted, green) and $\mathrm{AAA}=15$ (solid, blue). The qualitative patterns resemble those observed in the power--power case. In addition, Figure~\ref{fig:piq} displays the optimal investment ratio (vertical axis, $\pi_t^\ast / W_t^\ast$) as a function of current wealth (horizontal axis, $W_t^\ast$) for three levels of ambiguity aversion: $\mathrm{AAA}=1$ (dashed, red), $\mathrm{AAA}=2$ (dotted, green) and $\mathrm{AAA}=4$ (solid, blue). We also find that the fraction invested in the risky asset declines with wealth across all ambiguity attitudes. Moreover, ambiguity aversion shifts probability weight toward adverse scenarios, thereby reducing risk exposure. 
These patterns are consistent with those obtained under the power--power specification, underscoring the consistency of the economic implications.

\section{Conclusion} \label{sec_conclusion} 
We have proposed a general framework for non-concave dynamic portfolio optimization under smooth ambiguity with Bayesian learning. 
To address the inherent dynamic inconsistency of smooth ambiguity preferences, we have employed the robust representation of quasiconcave functionals. This has allowed us to reformulate the original certainty-equivalent problem into an optimization problem with a penalty on the belief about the uncertain expected return. Under suitable specifications of $\phi$, the problem reduces to an ambiguity-neutral optimization with distorted priors, naturally interpreted as a Bayesian adaptive problem. Priors are updated via nonlinear filtering, and in combination with the martingale approach and concavification principle, this structure yields semi-closed-form optimal strategies.  

We have then applied this framework to option-based compensation schemes in fund management to study how ambiguity aversion interacts with convex incentive contracts. Our analysis shows that ambiguity aversion endogenously tilts beliefs toward adverse states, shrinks in-the-money regions for convex contract components, and lowers risky exposure, thereby mitigating risk-shifting incentives relative to the ambiguity-neutral benchmark. Thus, effectively, ambiguity aversion functions as an endogenous risk constraint in convex compensation environments.

Because the argument is developed within a general option-style payoff framework, the implications carry over to other payoff-engineering contexts, including guarantees, drawdowns, and insurance products with embedded options such as caps, floors, surrender features, and guaranteed minimum benefits (e.g., \cite{bacinello2011variable}).


This paper opens several avenues for future research. A first technical direction is to explore further and compare the robust representation of quasiconcave functionals with the equilibrium approaches studied in the time-inconsistency literature. For example, it would be interesting to compare our investment strategies with weak/strong/regular equilibria as in \cite{huang2021strong} and \cite{he2021equilibrium}. A second extension is to incorporate more realistic market features, such as portfolio constraints (\citep{dai2022nonconcave}) or transaction costs (\citep{qian2023non}). Finally, it would be valuable to investigate other non-concave problems beyond the option-based compensation considered here and to examine how ambiguity aversion shapes outcomes. Examples include goal-reaching problems \citep{capponi2024continuous}, S-shaped utility from prospect theory \citep{tse2023portfolio}, and convex performance fee schemes \citep{he2018profit}. We leave these questions for future work.

\clearpage

\section*{Acknowledgments}
An Chen and Shihao Zhu gratefully acknowledge financial support from the Deutsche Forschungsgemeinschaft (DFG), Project-ID 509303834, FOR 5583 ``Asset Allocation and Asset Pricing under Regulatory Uncertainty".

\bibliographystyle{abbrvnat}
\bibliography{KMM}

\clearpage
\appendix
\setcounter{subsection}{0}
\renewcommand\thesubsection{A.\arabic{subsection}}
\setcounter{equation}{0}
\renewcommand\theequation{A.\arabic{equation}}
\renewcommand{\thelemma}{\Alph{section}.\arabic{lemma}}

\renewcommand{\thetable}{A.\arabic{table}} 
\setcounter{table}{0}

 \section{Notation and Auxiliary Lemmata}




\begin{table}[htbp]
\centering
\caption{Notation for probability and expectation}
\label{notation}
\begin{tabular}{ll}
\toprule
\textbf{Symbol} & \textbf{Meaning} \\
\midrule
$\mathbb P \ (\mathbb E^{\mathbb P})$ & Generic probability measure (and expectation)   \\
$\mathbb P^Z \ (\mathbb E^{\mathbb P^Z})$  &Model (and expectation) \\
$\mathbb{P}_0$ & Reference probability measure, dominating all  $\mathbb P^Z$ \\
$\mathfrak{P}$ & Set of models \\
$\mathcal{P} \ (\mathbb E^{\mathcal P}) $ & Prior on $Z$ (and expectation) \\
$\mathcal{Q} \ (\mathbb E^{\mathcal Q})   $ & Distorted Prior on $Z$ (and expectation) \\
$\mathcal M(\mathcal S)$ & Set of priors on the parameter space $\mathcal S$\\
$\mathbb P^{\mathcal Q} \ (\mathbb E^{\mathbb P^{\mathcal Q}}) $ & Product measure of $\mathbb P^Z$ and $\mathcal Q$ (and expectation) \\
$\widehat{\mathbb P}  \ (\mathbb E^{\widehat{\mathbb P}}) $ & Equivalent measure of $\mathbb P$ in filtering  (and expectation)  \\
\bottomrule
\end{tabular}
\vspace{3mm}

\end{table}

\begin{lemma}\label{minimax}
Let $M$ and $N$ be convex spaces, one of which is compact, and $f$ a function on $M \times N$, quasi-concave-convex and upper semi-continuous-lower semi-continuous. Then
\[
\sup_x \inf_y f(x,y)=\inf_y \sup_x f(x,y).	
\]
\end{lemma}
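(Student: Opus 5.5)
We will prove the nontrivial inequality by the elementary one-dimensional reduction of Komiya's proof of Sion's theorem, which uses only connectedness of line segments and compactness.

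\emph{Reductions.} The inequality $\sup_x\inf_y f\le \inf_y\sup_x f$ holds for any function, so only the reverse inequality needs proof. We treat the case in which $N$ is compact. The case in which $M$ is compact follows by applying the same argument to $g(y,x):=-f(x,y)$: then $g$ is quasi-concave and upper semicontinuous in $y$, quasi-convex and lower semicontinuous in $x$, so the roles of the two spaces are exchanged.

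\emph{Setting up the contradiction.} Assume $\sup_x\inf_y f<\inf_y\sup_x f$ and fix $\alpha$ strictly between the two numbers. Since $f(x,\cdot)$ is lower semicontinuous on the compact set $N$, every infimum over $y$ is attained; this is why we may write minima below.

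\emph{Step 1 (two points).} If $x_1,x_2\in M$ satisfy $\alpha<\min_{y}\max\{f(x_1,y),f(x_2,y)\}$, then some $x_0\in M$ satisfies $\alpha<\min_y f(x_0,y)$.
\begin{enumerate}[label=(\alph*)]
\item Suppose not. Choose $\beta$ with $\alpha<\beta<\min_y\max\{f(x_1,y),f(x_2,y)\}$.
\item For $z$ in the segment $[x_1,x_2]$, define $L_z:=\{y: f(z,y)\le\alpha\}$ and $L'_z:=\{y:f(z,y)\le\beta\}$. By quasi-convexity and lower semicontinuity in $y$, these sets are closed and convex. Each $L_z$ is nonempty by the contradiction hypothesis.
\item The choice of $\beta$ gives $L'_{x_1}\cap L'_{x_2}=\emptyset$.
\item Quasi-concavity in $x$ gives $f(z,y)\ge\min\{f(x_1,y),f(x_2,y)\}$. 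Hence $L'_z\subset L'_{x_1}\cup L'_{x_2}$.
\item The set $L'_z$ is convex, hence connected, and the two sets on the right are disjoint and closed. Therefore $L'_z$ lies entirely in one of them, and so does $L_z\subset L'_z$.
\item Split the segment into $I:=\{z: L_z\subset L'_{x_1}\}$ and $J:=\{z:L_z\subset L'_{x_2}\}$. These are disjoint and nonempty, since $x_1\in I$ and $x_2\in J$.
\item To show $I$ is closed, take $z_n\to z$ with $z_n\in I$ and pick $y\in L_z$. Then $f(z,y)\le\alpha<\beta$. Upper semicontinuity in $x$ gives $f(z_n,y)<\beta$ for $n$ large, so $y\in L'_{z_n}$.
\item Since $L'_{z_n}$ lies in one of the two disjoint pieces and contains $L_{z_n}\subset L'_{x_1}$, it must lie in $L'_{x_1}$. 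Hence $y\in L'_{x_1}$ and $z\in I$. The same argument shows $J$ is closed.
\item The segment is connected, so it cannot be the disjoint union of two nonempty closed sets. This is a contradiction.
\end{enumerate}
I expect this step to be the main obstacle. The delicate point is the use of the intermediate level $\beta$, which is what makes both $I$ and $J$ closed.

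\emph{Step 2 (finitely many points).} By induction on $n$: if $\alpha<\min_y\max_{i\le n}f(x_i,y)$, then there is $x_0$ with $\alpha<\min_y f(x_0,y)$.
\begin{enumerate}[label=(\alph*)]
\item For the inductive step, restrict $y$ to the compact convex set $N':=\{y: f(x_n,y)\le\alpha\}$.
\item On $N'$ we have $\alpha<\min_{y\in N'}\max_{i<n}f(x_i,y)$. The induction hypothesis, applied with $N'$ in place of $N$, yields $x'$ with $\alpha<\min_{y\in N'}f(x',y)$.
\item It follows that $\alpha<\min_{y\in N}\max\{f(x',y),f(x_n,y)\}$. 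Step 1 now produces the required $x_0$.
\end{enumerate}
If $N'$ is empty, take $x_0=x_n$.

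\emph{Step 3 (compactness).} Because $\alpha<\inf_y\sup_x f$, every $y\in N$ has some $x$ with $f(x,y)>\alpha$. The sets $\{y: f(x,y)>\alpha\}$ are open by lower semicontinuity in $y$, so they cover $N$. Compactness of $N$ gives a finite subcover indexed by $x_1,\dots,x_n$. Then $\alpha<\min_y\max_i f(x_i,y)$, the minimum being attained because the maximum of finitely many lower semicontinuous functions is lower semicontinuous. Step 2 supplies $x_0$ with $\inf_y f(x_0,y)>\alpha$. This contradicts $\sup_x\inf_y f<\alpha$ and completes the proof.
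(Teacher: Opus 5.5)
Your proof is correct, and it takes a genuinely different route from the paper. The paper gives no argument of its own and simply cites \cite{sion1958general} (Corollary~3.3). Sion's original proof rests on a Knaster--Kuratowski--Mazurkiewicz-type intersection lemma, whereas you reproduce Komiya's elementary argument, which uses only connectedness of segments and convex sets together with compactness.

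The individual steps check out. The symmetry reduction via $g(y,x)=-f(x,y)$ is right. The auxiliary level $\beta$ does exactly what is needed to make both $I$ and $J$ closed. The induction in Step~2 is sound, provided the inductive statement is quantified over all compact convex subsets of $N$; your phrase ``with $N'$ in place of $N$'' does this implicitly.

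The only cosmetic point concerns a possibly non-metrizable ambient space. There you should parametrize the segment by $t\in[0,1]$ through $t\mapsto(1-t)x_1+tx_2$ and regard $I,J$ as subsets of $[0,1]$. Then the sequential closedness argument and the connectedness of $[0,1]$ apply verbatim.

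The citation buys brevity and Sion's general statement at no cost. Your argument buys self-containment and shows where each hypothesis is used. When $N$ is compact, the topology of $M$ enters only through upper semicontinuity of $t\mapsto f((1-t)x_1+tx_2,y)$ on $[0,1]$. Lower semicontinuity and compactness on $N$ supply attained minima, closed convex sublevel sets, and the finite subcover.

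That refinement matters for the application in the proof of Theorem~\ref{theorem3.1}. There the naturally compact factor is $\mathcal{M}(\mathcal{S})$ with the weak topology, which is the infimum side, whereas the paper asserts compactness of $\mathcal{A}(w)$.
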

\begin{proof}{Proof}
The above lemma can be found in  Corollary 3.3 of \cite{sion1958general}.
\end{proof}

\medskip
\begin{lemma}[Convex conjugate for payoff $g(x)$]\label{convex} 
Define $\mathcal{L}(y,x):=u(g(x))-yx, y >0$, $I(x):=x^{\frac{1}{\alpha-1}}$ and $h(x):=\frac{I(x/\delta)-C}{\delta}+K$. Then 
\begin{equation}\label{eq-a1}
\mathcal{X}(y):= \argsup_{x\geq 0} \mathcal{L}(y,x)
\\
=\left\{
\begin{aligned}
h(y), \quad &0<y<\widehat{y}, \\
0, \quad & y \geq \widehat{y},
\end{aligned}
\right.
\end{equation}
where the concavification point $\widehat{y}$ is defined as the unique root $y\in (0,  \delta C^{\alpha-1})$ of 
\[
u(g(h(y)))- u(C)=y h(y).
\]
\end{lemma}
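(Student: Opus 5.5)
We prove Lemma~\ref{convex} by maximizing $\mathcal L(y,\cdot)$ separately on the two pieces of the domain where $g$ is affine, and then comparing the two candidate maxima.

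\emph{Flat piece $[0,K]$.} Here $g\equiv C$, so $\mathcal L(y,x)=u(C)-yx$. This is strictly decreasing in $x$, so the best choice is $x=0$, with value $u(C)$.

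\emph{Increasing piece $[K,\infty)$.} Here $g(x)=\delta(x-K)+C$ and $x\mapsto u(\delta(x-K)+C)-yx$ is strictly concave. The first-order condition $\delta u'(g(x))=y$ gives $g(x)=I(y/\delta)$, that is, $x=h(y)$. This point lies in $[K,\infty)$ exactly when $I(y/\delta)\ge C$, i.e.\ $y\le \delta C^{\alpha-1}=:y_0$. For $y\ge y_0$ the derivative at $K$ is already nonpositive, so the maximum on this piece sits at $x=K$. That value, $u(C)-yK$, is worse than the flat-piece value $u(C)$, so for $y\ge y_0$ the maximizer is $0$.

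\emph{Comparison for $y<y_0$.} Define
\[
D(y):=\mathcal L(y,h(y))-u(C)=u(g(h(y)))-yh(y)-u(C).
\]
By the envelope theorem, $D'(y)=-h(y)<0$ (note $h(y)\ge K>0$), so $D$ is strictly decreasing on $(0,y_0)$.

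At the right endpoint, $h(y_0)=K$ and $g(K)=C$, so $D(y_0)=-y_0K<0$.

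At the left endpoint, consider $y\downarrow0$:
\begin{itemize}
\item If $\alpha\in(0,1)$: $u(g(h(y)))\sim \tfrac1\alpha (y/\delta)^{\alpha/(\alpha-1)}$ and $yh(y)\sim \delta^{-1}y\,(y/\delta)^{1/(\alpha-1)}=(y/\delta)^{\alpha/(\alpha-1)}$. Hence $D(y)\sim(\tfrac1\alpha-1)(y/\delta)^{\alpha/(\alpha-1)}\to+\infty$.
\item If $\alpha<0$: $u\le0$, and the same computation gives
\[
D(y)\sim\Big(\tfrac1\alpha-1\Big)(y/\delta)^{\alpha/(\alpha-1)}+yK-u(C).
\]
The exponent $\alpha/(\alpha-1)$ is positive, so the first term tends to $0$, and $D(0^+)=-u(C)=-C^\alpha/\alpha>0$.
\end{itemize}
In both cases $D(0^+)>0$. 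Combined with $D(y_0)<0$, continuity, and strict monotonicity, $D$ has a unique root $\widehat y\in(0,y_0)$. Therefore $h(y)$ is the maximizer for $y<\widehat y$ and $0$ is the maximizer for $y\ge\widehat y$ (at $y=\widehat y$ both attain the maximum; the tie is broken as stated). This gives \eqref{eq-a1}.

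The main obstacle is the boundary behaviour of $D$ as $y\downarrow0$. The sign of $D(0^+)$ depends on whether $\alpha$ is positive or negative, so the two cases must be handled separately as above. Also, $u(0)=-\infty$ when $\alpha<0$, so the value $\mathcal L(y,0)$ must be read as $u(C)$, which is legitimate because $g(0)=C>0$ whenever $C>0$.
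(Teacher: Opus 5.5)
Your proof is correct and follows essentially the same route as the paper: it splits at $y_0=\delta C^{\alpha-1}$, reduces to the two candidates $0$ and $h(y)$, uses $D'(y)=-h(y)<0$, and checks the signs of $D$ at $0^+$ (separately for $\alpha\in(0,1)$ and $\alpha<0$) and at $y_0$. If anything your version is a bit more careful. You get the correct endpoint value $D(y_0)=-y_0K$ (the paper writes $-\delta C^{\alpha}$), and when $\alpha\in(0,1)$ you actually compare the two divergent terms instead of writing $u(\infty)-u(C)$. The only slip is harmless: the linear term in your $\alpha<0$ expansion should be $y(C/\delta-K)$ rather than $+yK$.
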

\begin{proof}{Proof}
Note first that $\mathcal{L}$ is continuous in $x$. Except at $K$ its derivative with respect to $x$ exists and is given by
\begin{equation}\label{eqa-2}
\mathcal L_x(y,x)=
\begin{cases}
-y, &\quad 0 \leq x <K,\\
\delta u'(\delta( x-K)+C)-y, & \quad x>K.
\end{cases}
\end{equation}
Solving the first-order condition $\mathcal L_x(y,x) \overset{!}{=} 0  $ for $x$, we obtain a local extrema $h(y)$ (if $h(y)$ is larger than $K$). Further, the edge points $0$ and $K$ are candidates for the maximizer $\mathcal{X}(y)$. At the edge point  $K$, we obtain 
\[
\mathcal L_x(y,K+)=\delta u'(C)-y=\delta C^{\alpha-1}-y.
\]
This determines two cases that are used to determine $\mathcal{X}(y)$: 

\textbf{Case 1:} $y\geq \delta C^{\alpha-1}$. Using $\mathcal{L}_x$ in (\ref{eqa-2}), we obtain the monotonicity pattern in $x$:  $\mathcal{L}$ is decreasing on $(0,+\infty)$. As $\mathcal{ L}$ is continuous in $x$, we obtain $\mathcal{X}(y)=0$ when $y\geq \delta C^{\alpha-1}$.

\textbf{Case 2:} $0 <  y<\delta C^{\alpha-1}$.  We observe that the monotonicity pattern in $x$, i.e.,  $\mathcal{L}$ is decreasing on $(0,K)$, increasing on $(K,h(y))$, decreasing on $(h(y),+\infty)$. This leaves two candidates for the maximizer $\mathcal{X}(y)$, namely $0$ and $h(y)$. To compare $\mathcal{L}(y,0)$ and $\mathcal{L}(y,h(y))$, we observe that 
\[
\mathcal{L}(y,h(y))-\mathcal{L}(y,0)
=u(g(h(y)))-u(C)-yh(y)
  \]
due to $h(y)>K$. Then we define the function 
\[
\Delta \mathcal{L}(y):= u(g(h(y)))-u(C)-yh(y)
\]
and we compute 
\begin{equation*}
    \begin{aligned}
        \frac{\partial \Delta \mathcal L(y)}{\partial y}&= \delta [u'(\delta(h(y)-K)+C)-y]h'(y)-h(y)\\
    &=-h(y)<0.    \end{aligned}
\end{equation*}
This implies that $\Delta\mathcal L(y)$ is strictly decreasing on $(0, \delta C^{\alpha-1})$. Moreover, we compute 
\begin{equation*}
\Delta \mathcal L(0)=\begin{cases}
      u(\infty)-u(c)=+\infty>0, \ &\text{if} \ 0<\alpha<1, \\
	 -\frac{C^\alpha}{\alpha}>0,  \ &\text{if} \ \alpha<0, 	 
   \end{cases}
\end{equation*}
and $\Delta \mathcal L(\delta C^{\alpha-1})=-\delta C^\alpha <0 $. Therefore, there exists an unique root $\widehat{y} \in (0,\delta C^{\alpha-1})$ of $u(g(h(y)))- u(C)=y h(y).$ Furthermore, when $y\in (0,\widehat{y})$, $\Delta \mathcal L>0$, thus $h(y)$ is the maximizer; when  $y\in (\widehat{y},+\infty), $ $0$ is the maximizer, which finishes the proof.

\end{proof}

\medskip
\subsection{Filtering Arguments}\label{filtering}

In this section, we reformulate the Bayesian adaptive problem by applying filtering theory, thereby modeling the drift as a process adapted to the observation filtration. This transformation lays the foundation for the proof of Theorem \ref{the4.1}.

Define the market price of risk $\Theta=(Z-r)/\sigma$ with $\theta=(z-r)/\sigma$.
The setting concerns a financial market in which the {stock price process} $S$ is perfectly observable in continuous time, whereas the {drift parameter} $\Theta$ that drives the expected return of the stock is {not} directly observable.  
Consequently, the investor can only learn about $\Theta$ by observing the evolution of $S$ and updating her belief on $\Theta$ via Bayesian filtering.

 Note further that observing the stock price is equivalent to monitoring the process \( Y=(Y_t)_{t\in[0,T]} \) as \( Y_t := W_t+ \Theta t \).  
The price dynamic of the risky asset \( S  \) in (\ref{stock}) can be written as
\[
dS_t = S_t\!\left(r \, dt + \sigma \, dY_t\right), \quad t \ge 0,
\]
where $Y_t$ is a $\mathbb{P}$-Brownian motion with drift $\Theta$.  
Let $\mathbb{F}^Y:=(\mathcal{F}^Y_t )_{t\in[0,T]}$, where  
$\mathcal{F}^Y_t := \sigma(Y_s,0\le s \le t)$ is the filtration generated by $Y$---equivalent to the filtration generated by $S$.

As time progresses, the investor draws her inferences about $\Theta$ and updates
her prior via $\widehat{\theta}:=(\widehat{\theta}_t)_{t\in [0,T]}$:
\[
\widehat{\theta}_t := \mathbb{E}^{\mathcal{Q}}[\Theta\mid \mathcal{F}^Y_t].
\quad t \in [0,T],
\]
Note that condition (\ref{prior-integrability}) ensures that $(\widehat{\theta}_t)_{t\in[0,T]}$ is well-defined.


 Let the process $B^Y=(B^Y_t)_{t\in[0,T]}$ be given by
\begin{equation}\label{eqa-3-1}
	dB_t^Y = dB_t + (\Theta - \widehat{\theta}_t)\,dt.
	\end{equation}
The process $(B_t^Y)_{t\in[0,T]}$ is called the innovation process in filtering theory. It is well known that $B^Y$ is an $(\mathbb{F}^Y,\mathbb{P})$ standard Brownian motion
(see, for example, Proposition 2 in \cite{bismuth2019portfolio}).  
Moreover, by Theorem 1 in \cite{bismuth2019portfolio} (see also \cite{karatzas2001bayesian}, \cite{rieder2005portfolio}), we have
\begin{equation} \label{eq_thetahat} 
\widehat{\theta}_t = \frac{F_y(t,Y_t;\mathcal Q)}{F(t,Y_t; \mathcal Q)}, \quad t\in[0,T],
\end{equation}
where
\begin{equation}\label{eqb2}
F(t,y; \mathcal Q):= \int_{\mathcal{S}}\exp\!\bigl\{\theta y-\tfrac{1}{2}\theta^2 t\bigr\}\,\mathcal{Q}(d\theta).
\end{equation}

With this conditional expectation---i.e. based on the investor's Bayesian update $\widehat{\theta}$---the stock-price evolution becomes
\[
dS_t = S_t\!\left(\bigl(r + \sigma \widehat{\theta}_t\bigr)\,dt + \sigma \, dB_t^Y\right),
\]
where $B_t^Y$ is given in (\ref{eqa-3-1}).

\noindent Note that due to the relation 
\[
Y_t={B}^Y_t+\int^t_0\widehat{\theta}_sds,
\]
we can define a new probability measure $\widehat{\mathbb{P}}$ for each fixed $T$, under which $Y_t$ becomes a standard $(\mathbb{F}^Y,\widehat{\mathbb{P}})$ Brownian motion for $0\leq t \leq T$:
\[
\Lambda_T:=\frac{d\widehat{\mathbb{P}}}{d \mathbb{P}}\bigg|_{\mathcal F^Y_ T}= \exp\bigg(-\int^T_0 \widehat{\theta}_t d{B}^Y_t-\frac{1}{2}\int^T_0\widehat{\theta}_t^2dt  \bigg).
\]

{\raggedleft{An}} application of It\^o's rule to the process $\Lambda_t$ gives
\[
d{\Lambda}_t=-{\Lambda}_t \widehat{\theta}_tdB^Y_t, \quad {\Lambda}_0=1.
\]
Now we can rewrite (\ref{wealth}) as 
\begin{equation}\label{eq4-6-1-1}
\begin{aligned}
    dW_t^\pi&=(rW^\pi_t+\sigma \pi_t \widehat{\theta}_t)dt+\pi_t \sigma dB_t^Y \\&=rW^\pi_t dt+\pi_t\sigma \; d Y_t,
    \end{aligned}
\end{equation}
where $\widehat{\theta}_t$ is given in \eqref{eq_thetahat}.  Further, we have 
\[
d ({\Lambda}_t e^{-rt}W_t^\pi )= e^{-rt}{\Lambda}_t[\sigma\pi_t -W_t^\pi\widehat{\theta}_t]dB^Y_t.
\]
This shows that, on a given finite time horizon $[0,T]$, the process $(e^{-rt}{\Lambda}_t W_t^\pi)_{t\in[0,T]}$ is a nonnegative ($\mathbb{F}^Y,\mathbb{P}$)-local martingale, hence also a supermartingale; in particular, for all $\pi \in \mathcal{A}(w)$, 
\begin{equation}\label{3-2}
\mathbb{E}^{\mathbb{P}}[e^{-rT}{\Lambda}_T W_T^\pi]=\mathbb{E}^{\widehat{\mathbb{P}}}[e^{-rT} W_T^\pi]   \leq w,
\end{equation}
where $\mathbb E^{\mathbb P}$ and ${\mathbb{E}^{\widehat{\mathbb{P}}}}$ are  expectations with respect to  $\mathbb P$ and $\widehat{\mathbb{P}}$, respectively.

Moreover, we state  the following lemma for later use. 
\begin{lemma}\label{lambda}
We can express $\Lambda_t$ as a function of the observational magnitude $Y_t$:
\[
\Lambda_t=\frac{1}{F(t,Y_t;\mathcal Q)}, \quad t\geq 0,
\]
where $F(t,y;\mathcal Q)$ is given in (\ref{eqb2}).
\end{lemma}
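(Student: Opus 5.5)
Because both $\Lambda$ and $1/F(\cdot,Y_\cdot;\mathcal Q)$ start at $1$, the plan is to show they satisfy the same linear SDE, $d X_t = -X_t\widehat\theta_t\,dB^Y_t$; pathwise uniqueness then forces them to coincide. The earlier derivation already gives $d\Lambda_t=-\Lambda_t\widehat\theta_t\,dB^Y_t$ with $\Lambda_0=1$. For the other process, note that $F(0,y;\mathcal Q)=\int_{\mathcal S}\mathcal Q(d\theta)=1$, so $1/F(0,Y_0;\mathcal Q)=1$ since $Y_0=0$.

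Set $G_t:=F(t,Y_t;\mathcal Q)$. First, $F$ is $C^{1,2}$ and derivatives can be taken under the integral sign. This follows by dominated convergence, since $\mathcal S$ is compact, so $\theta$ ranges over a bounded set and the integrand $\exp\{\theta y-\tfrac12\theta^2t\}$ and its derivatives are locally uniformly bounded. Differentiating under the integral gives the backward heat equation
\[
F_t+\tfrac12F_{yy}=0 .
\]
Itô's formula with $d\langle Y\rangle_t=dt$ then yields $dG_t=F_y(t,Y_t;\mathcal Q)\,dY_t$. Using \eqref{eq_thetahat} and $dY_t=\widehat\theta_t\,dt+dB^Y_t$, this becomes
\[
dG_t=G_t\widehat\theta_t\,(\widehat\theta_t\,dt+dB^Y_t).
\]

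Next, apply Itô's formula to $1/G_t$ (note $G>0$):
\[
d(1/G_t)=-G_t^{-2}\,dG_t+G_t^{-3}\,d\langle G\rangle_t
=-\frac{\widehat\theta_t}{G_t}\bigl(\widehat\theta_t\,dt+dB^Y_t\bigr)+\frac{\widehat\theta_t^2}{G_t}\,dt
=-\frac{1}{G_t}\,\widehat\theta_t\,dB^Y_t .
\]
So $1/G$ solves the same linear SDE as $\Lambda$ with the same initial value. Since $\widehat\theta$ is continuous and adapted, hence locally bounded, this SDE has a unique solution, namely the stochastic exponential $\mathcal E(-\int\widehat\theta\,dB^Y)$. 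Therefore $\Lambda_t=1/F(t,Y_t;\mathcal Q)$ for all $t$.

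Alternatively, one can argue by Bayes' formula: under $\widehat{\mathbb P}$, $Y$ is a Brownian motion independent of $\Theta$, and the $\mathbb P$-density of $Y$ on $\mathcal F^Y_t$ relative to Wiener measure is $\int\exp\{\theta Y_t-\tfrac12\theta^2t\}\mathcal Q(d\theta)$ by Girsanov. The reciprocal of this density is $\Lambda_t$.

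The main technical point is justifying differentiation under the integral and Itô's formula for $F$. Compactness of $\mathcal S$ handles this; in general, \eqref{prior-integrability} together with Gaussian bounds would suffice, at least for $t<2\eta$-type ranges. The remaining steps are routine computation.
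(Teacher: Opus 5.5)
Your proof is correct. The paper gives no argument of its own here; it simply defers to Lemma~2.1 of \cite{bauerle2019optimal}, so your proof is a genuinely self-contained replacement rather than a reconstruction.

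Your main route is an It\^o verification. Using the backward heat equation $F_t+\tfrac12F_{yy}=0$ and the filter formula \eqref{eq_thetahat}, you show that $1/F(t,Y_t;\mathcal Q)$ and $\Lambda_t$ solve the same linear SDE from the same initial value. The SDE-uniqueness step can even be dropped: the same computation gives $d\log F(t,Y_t;\mathcal Q)=\widehat\theta_t\,dB^Y_t+\tfrac12\widehat\theta_t^2\,dt$, which exponentiates directly to $1/\Lambda_t$.

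This route is short and legitimate in the paper's logical order, but it has two limitations:
\begin{itemize}
\item It takes \eqref{eq_thetahat} as an input.
\item It only identifies two stochastic exponentials, so by itself it does not show that $1/F(t,Y_t;\mathcal Q)$ is a true $\mathbb P$-martingale. This is harmless here, since compactness of $\mathcal S$ makes $\widehat\theta$ bounded and Novikov applies.
\end{itemize}

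The Bayes/Girsanov argument you list as an alternative is the more primitive route, and it is the usual one in this literature (cf.\ \cite{karatzas2001bayesian}). Take an auxiliary measure under which $Y$ is a Brownian motion independent of $\Theta\sim\mathcal Q$. Conditioning the likelihood $\exp\{\Theta Y_t-\tfrac12\Theta^2t\}$ on $\mathcal F^Y_t$ then yields, in one stroke, $d\mathbb P/d\widehat{\mathbb P}|_{\mathcal F^Y_t}=F(t,Y_t;\mathcal Q)$, the true-martingale property, and \eqref{eq_thetahat} itself. One caveat on your phrasing: $\widehat{\mathbb P}$ is defined only on $\mathcal F^Y_T$, so ``$Y$ independent of $\Theta$ under $\widehat{\mathbb P}$'' should be read through this auxiliary measure. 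It agrees with $\widehat{\mathbb P}$ on $\mathcal F^Y_T$ because both make $Y$ a Brownian motion.

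Two minor slips:
\begin{itemize}
\item $F(0,y;\mathcal Q)=\int e^{\theta y}\mathcal Q(d\theta)$ is not $1$ in general. Only $F(0,0;\mathcal Q)=1$ holds, and that is all you actually use, since $Y_0=0$.
\item The hedge about ``$t<2\eta$-type ranges'' is unnecessary. For $t\ge0$ one has $e^{\theta y-\theta^2t/2}\le e^{|\theta||y|}$, so \eqref{prior-integrability} already gives local domination of the integrand and its derivatives for all $t\ge0$. For $t>0$ no condition on the prior is needed at all.
\end{itemize}
Neither slip affects the argument under the paper's compactness assumption.
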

\begin{proof}{Proof}
The proof can be found in Lemma 2.1 in \cite{bauerle2019optimal}.
\end{proof}

\section{Technical Proofs}
\renewcommand\theequation{B.\arabic{equation}}
\renewcommand\thesubsection{B.\arabic{subsection}}

\subsection{Proof of Theorem \ref{theorem3.1}}
\begin{proof}{Proof}\label{proofthe3.1}
We apply Lemma \ref{minimax} to the function $L: \mathcal{A}(w) \times \mathcal{M(}\mathcal{S}) \to \mathbb{R}$ defined by 
\[
L(\pi,\mathcal{Q}):=R(\mathcal{Q}, \mathbb{E}^{\mathbb{P}^\mathcal{Q}}[U(W_T^\pi)] ).
\]
The sets $\mathcal{A}(w)$ and $\mathcal{M}(\mathcal{S})$ are clearly convex and $\mathcal{A}$ is a compact set.   	Point (2) of Proposition \ref{prop3.1} implies $L(\pi,\cdot)$ is quasi-convex for any $\pi \in \mathcal{A}$. Then we notice that $\pi \to W^\pi_T$ is linear. Moreover $U$ is increasing and $R$ is nondecreasing in the second argument, so that $L(\cdot,\mathcal{Q})$ is quasi-concave for any $\mathcal{Q}$. Thus, $L$ is quasi-concave-convex, i.e., $L$ is quasi-concave in $\pi$ and quasi-convex in $\mathcal{Q}$.

 Moreover, the function $L(\mathcal{Q},\cdot)$ is upper semicontinuous by point (1) of Proposition \ref{prop3.1} and since
\[
\pi \mapsto \mathbb{E}^{\mathbb{P}^\mathcal{Q}}[U(W_T^\pi)]  
\]
is continuous.  It remains to show that $L(\pi,\cdot)$ is lower semi-continuous for all $\pi\in \mathcal{A}$. Assume first that we are in the context of Assumption \ref{assume} (1). Then, for any  sequence $(\mathcal{Q}^n)$ converging to a limit $\mathcal{Q} \in \mathcal{M}(\mathcal{S})$ and for any index $k \in \mathbb{N}$, 
\begin{equation*}
    \begin{aligned}
        \lim_n \inf L(\pi, \mathcal{Q}^n)&=\lim_n \inf_{m\geq n}R(\mathcal{Q}^m, \mathbb{E}^{\mathbb{P}^{\mathcal{Q}^m}}[U(W_T^\pi)] )	\\
&\geq \lim_n \inf_{m\geq n}R(\mathcal{Q}^m, \inf_{j\geq k}\mathbb{E}^{\mathbb{P}^{\mathcal{Q}^j}}[U(W_T^\pi)] )		\\
&\geq R(\mathcal{Q}, \inf_{j\geq k}\mathbb{E}^{\mathbb{P}^{\mathcal{Q}^j}}[U(W_T^\pi)] ),	
\end{aligned}
\end{equation*}
where the first inequality holds because $R$ is nondecreasing in $s$. Now, making $k \to +\infty$, by continuity we obtain 
\[
\lim_n \inf L(\pi,\mathcal{Q}^n) \geq L(\pi,\mathcal{Q}).
\]
Under Assumption \ref{assume} (2) a similar argument can be used since 
\[
\mathbb{E}^{\mathbb{P}^{\mathcal{Q}}}[U(W_T^\pi)]>0. 		
\]
Therefore, the infimum and the supremum in Lemma \ref{minimax} are attained and there exists a saddle point.
\end{proof}

\subsection{Proof of Theorem \ref{the4.1}}\label{proofthe4.1}

\begin{proof}{Proof}
Now we use martingale method, to maximize the expected utility 
\[
V(w;\mathcal Q)=	\sup_{\pi \in \mathcal{A}} \mathbb{E}^{\mathbb{P}}[u(g(W_T^\pi))]
\]
subject to the constraint (\ref{3-2}) where $W_T^\pi$ is given in (\ref{eq4-6-1-1}), that is, we have the following static optimization problem:
\begin{equation}\label{eq3-8}
W_T^*=\argmax_{W^{\pi}_T\geq 0}  \mathbb{E}^{\mathbb{P}}[u(g(W_T^\pi))]	
\end{equation}			
such that $\mathbb{E}^{\mathbb{P}}[\xi_T W_T^\pi]\leq w$ with $\xi_T=e^{-rT}{\Lambda}_T $.

To solve (\ref{eq3-8}), consider the Lagrangian  problem 
\begin{equation*}
\sup_{W_T^\pi\geq 0} \mathbb{E}^{\mathbb{P}}[u(g(W_T^\pi))+\kappa (w-\xi_T W_T^\pi ) ]	
\end{equation*}
for a multiplier $\kappa\geq0$.  Define $J(y):= \sup_{x>0} \{u(g(x))-yx \}$. The existence of $J$ is ensured by the growth condition in Definition \ref{def:utility}. The maximizer in $J(y)$ is  given  by $\mathcal{X}$ (cf.\ (\ref{eq-a1})), i.e.,
\[
\mathcal X(y)=h(y) \cdot \mathbbm{1}_{\{0<y<\widehat{y}\}}.	
\]
This leads to $W_T^*(\kappa,\xi_T)=\mathcal{X}(\kappa \xi_T)$ as a candidate for optimal terminal wealth. For any admissible terminal wealth $W_T^{\pi}$, we get 
\begin{equation*}
    \begin{aligned}
        \mathbb{E}^{\mathbb{P}}[u(g(W_T^\pi))] &\leq \mathbb{E}^{\mathbb{P}}[u(g(W_T^\pi))+\kappa (w-\xi_T W_T^\pi ) ]	\\
&\leq \mathbb{E}^{\mathbb{P}}[J(\kappa \xi_T)] +\kappa w	\\
&=\mathbb{E}^{\mathbb{P}}[u(g(\mathcal{X}(\kappa \xi_T))) ],  
\end{aligned}
\end{equation*}
i.e., optimal terminal wealth is indeed given by $W_T^*(\kappa,\xi_T)=\mathcal{X}(\kappa \xi_T)$. In the last equality, we have used that the budget constraint is binding for an optimal solution to  (\ref{eq3-8}).


The proof of existence of the Lagrangian multiplier $\kappa^*$ is as follows. Fix $\xi_T>0$. Define 
\[
\chi(\kappa):=	\mathbb{E}^{\mathbb{P}}[\xi_T W_T^*(\kappa,\xi_T) ]=\mathbb{E}^{\mathbb{P}}[\xi_T \mathcal X(\kappa\xi_T) ].
\]
We first show that $\chi (\kappa)$ is monotone decreasing in $\kappa$. By Lemma \ref{lambda}, we observe that 
\begin{equation*}
    \begin{aligned}
        \chi (\kappa)={\mathbb{E}^{\widehat{\mathbb{P}}}}[e^{-rT}\mathcal X(\kappa\xi_T)]={\mathbb{E}^{\widehat{\mathbb{P}}}}[e^{-rT}\mathcal X(\kappa e^{-rT}/F(T,Y_T))].    \end{aligned}
\end{equation*}
Clearly, $\chi(\kappa)$ is continuous and decreasing. Furthermore, $\lim_{\kappa \to 0}\chi(\kappa) = +\infty$, and $\lim_{\kappa\to \infty}\chi(\kappa)=0$. Therefore, there exists a unique $\kappa^*_w>0$ such that $\chi(\kappa^*_w)=w$.	

Next we show the optimal investment strategy. By martingale representation property of the Brownian filtration (e.g.,\ Theorem 3.4.2 in \cite{karatzas2012brownian}), we obtain 
\begin{equation}\label{3-8}
    \begin{aligned}
        e^{-rt}\widehat{W}_t:&=e^{-rT} {\mathbb{E}^{\widehat{\mathbb{P}}}}[\mathcal X(\kappa^*_w \xi_T)|\mathcal{F}_t^Y] 
\\&= w+ \int^t_0 e^{-rs} \widehat{\pi}_s \sigma dY_s, \quad t \in [0,T]    \end{aligned}
\end{equation}
for some $\mathbb{F}^Y$-progressively measurable process $\widehat{\pi}:[0,T] \times \Omega \to \mathbb{R}$ that satisfies $\int^T_0 e^{-2rs} ||\widehat{\pi}_t||^2dt<\infty$ a.s. (with respect to both $\mathbb{P}$ and $\widehat{\mathbb{P}}$). Furthermore, by comparing  (\ref{eq4-6-1-1}) and (\ref{3-8}), we have 
\begin{equation}\label{3-9}
\begin{aligned}
    W^{\widehat{\pi}}_t=\widehat{W}_t&= e^{-r(T-t)} \mathbb{E}^{\widehat{\mathbb{P}}}[\mathcal X(\kappa^*_w \xi_T)|\mathcal{F}_t^Y]\\&= \mathcal{Y}(T-t,Y_t), \quad t \in [0,T],\end{aligned}
\end{equation}
where 
\begin{equation}\label{3-10}
\mathcal{Y}(s,y)
:=
\begin{cases}
\displaystyle
e^{-rs}
\int_{\mathbb{R}}
\mathcal X\!\left(
\frac{\kappa^*_w e^{-rT}}{F(T,y+z)}
\right)
\varphi_s(z)\,dz,
& 0 < s \le T, \\[0.5em]
\displaystyle
\mathcal X\!\left(
\frac{\kappa^*_w e^{-rT}}{F(T,y)}
\right),
& s = 0 .
\end{cases}
\end{equation}
Here $\varphi_s(z):=(2\pi s)^{-1/2} e^{-z^2/(2s)} $ is the Gaussian density function.
Together with (\ref{3-8}), the equations (\ref{3-9}) and (\ref{3-10}) lead to the expression 
\begin{equation*}
\pi^*_t=\widehat{\pi}_t=\frac{1}{\sigma} \nabla \mathcal{Y}(T-t,Y_t), \quad 0\leq t<T.
\end{equation*}
Finally, in conjunction with (\ref{3-10}), we have the gradient 
{
\begin{equation*}
\nabla \mathcal{Y}(s,y)
= -\kappa^*_w e^{-r(T+s)}
   \int_{\mathbb{R}}
   \frac{G(T,y+z)}{F(T,y+z)} 
   \mathcal X'\!\left(
     \frac{\kappa^*_w e^{-rT}}{F(T,y+z)}
   \right)
   \varphi_s(z)\,dz .
\end{equation*}
}
where $G(t,y):=\frac{\nabla F}{F}(t,y)$. 
Then the value function of problem (\ref{eq4-7}) is 
\begin{equation*}
    \begin{aligned}
        V(w;\mathcal Q) 
&= \mathbb{E}^{\mathbb{P}}
   \bigl[u\bigl(g(W_T^*)\bigr)\bigr]\nonumber\\
&= \mathbb{E}^{\mathbb{P}}
   \Bigl[u\bigl(g(\mathcal X(\kappa^*_w e^{-rT}/F(T,Y_T)))\bigr)\Bigr]\nonumber\\
&= \mathbb{E}^{\widehat{\mathbb{P}}}
   \Bigl[
     F(T,Y_T)\,
     u\bigl(g(\mathcal X(e^{-rT}\kappa^*_w/F(T,Y_T)))\bigr)
   \Bigr]\nonumber\\
&= \int_{\mathbb{R}}
     F(T,z)\,
     u\!\left(
       g\!\left(
         \mathcal X\!\left(
           \frac{e^{-rT}\kappa^*_w}{F(T,z)}
         \right)
       \right)
     \right)
     \varphi_T(z)\,dz .   \end{aligned}
\end{equation*}

\end{proof}

\section{Factorial Design}\label{design}
\renewcommand\thetable{C.\arabic{table}}
\renewcommand\thesubsection{C.\arabic{subsection}}

\begin{table}[htbp]
\centering
\caption{Full $2^3$ factorial design}
\label{table5}
\begin{tabular}{cccc}
\hline
Prior (A) & RRA (B) & RAA (C) & $q^*$ \\
\hline
Low & Low & Low & 0.392 \\
High & Low & Low & 0.663 \\
Low & High & Low & 0.413 \\
High & High & Low & 0.733 \\
Low & Low & High & 0.079 \\
High & Low & High & 0.103 \\
Low & High & High & 0.104 \\
High & High & High & 0.127 \\
\hline
\end{tabular}
\end{table}

We implement a $2^3$ design with $x_A, x_B, x_C \in \{-1,+1\}$ 
denoting the baseline prior, RRA, and RAA. 
Effects are obtained from the saturated linear representation
\begin{equation*}
q^* = \beta_0 + \beta_A x_A + \beta_B x_B + \beta_C x_C
+ \beta_{AB} x_A x_B + \beta_{AC} x_A x_C
+ \beta_{BC} x_B x_C + \beta_{ABC} x_A x_B x_C.
\end{equation*}

Under $\pm1$ coding, each effect equals the difference in conditional means. Table~\ref{table5} reports the eight solutions.


\end{document}